\documentclass[journal,twoside,web]{IEEEcolor}

\usepackage{generic}
\usepackage[sort,noadjust]{cite}
\usepackage{amsmath,amssymb,amsfonts}
\usepackage{algorithmic}
\usepackage{graphicx}
\usepackage{algorithm,algorithmic}
\usepackage{hyperref}
\hypersetup{hidelinks=true}
\usepackage{textcomp}
\def\BibTeX{{\rm B\kern-.05em{\sc i\kern-.025em b}\kern-.08em
    T\kern-.1667em\lower.7ex\hbox{E}\kern-.125emX}}
\usepackage[english]{babel}
\usepackage[utf8]{inputenc}
\usepackage{amsfonts}
\usepackage{amssymb}
\usepackage{mathtools}
\usepackage{color}
\usepackage{hyperref}
\usepackage{listings}
\usepackage{setspace}
\usepackage{nicefrac}
\usepackage{csquotes}
\usepackage{calc}
\usepackage{url}
\usepackage{longtable}
\usepackage{stackrel}
\usepackage{ mathrsfs }
\usepackage{matlab-prettifier}
\usepackage{soul}
\let\labelindent\relax
\usepackage{enumitem}
\usepackage{eqparbox}
\usepackage{tabularx}
\usepackage{bbm}
\usepackage{extarrows}
\usepackage[normalem]{ulem}
\usepackage[font=small]{caption}
\usepackage[official]{eurosym}
\usepackage{booktabs}
\usepackage{lipsum}
\usepackage{tikz}
\newcommand{\R}{\mathbb{R}}

\newcommand{\B}{\mathbb{B}}

\newcommand{\E}{\mathbb{E}}
\renewcommand{\P}{\mathbb{P}}

\newcommand{\td}{\text{\normalfont d}}

\newcommand{\n}[1]{\left| #1\right|}

\newcommand{\eqb}{\begin{flalign}}
\newcommand{\eqe}{\end{flalign}}

\newcommand{\mc}{\mathcal}
\newcommand{\mf}{\mathfrak}

\newcommand{\tmc}[1]{\widetilde{\mc #1}}

\newcommand{\omitcustom}[1]{{}}

\newcommand{\wt}{\widetilde}

\newcommand\numberthis{\addtocounter{equation}{1}\tag{\theequation}}

\newcommand{\del}{\partial}

\allowdisplaybreaks

\graphicspath{{images/}}

\usepackage{theoremref}

\usepackage{amsthm}
\theoremstyle{theorem}
\newtheorem{thm}{Theorem} 
\newtheorem{prop}{Proposition} 
\newtheorem{corollary}{Corollary} 
\newtheorem{remark}{Remark} 
\newtheorem{assm}{Assumption}

\newtheorem{defn}{Definition} 
\newtheoremstyle{noperiod}
  {\topsep}   
  {\topsep}   
  {\normalfont}  
  {0pt}       
  {\bfseries} 
  {}          
  {5pt plus 1pt minus 1pt} 
  {}          
\theoremstyle{noperiod}

\newlength{\leftstackrelawd}
\newlength{\leftstackrelbwd}
\def\leftstackrel#1#2{\settowidth{\leftstackrelawd}%
{${{}^{#1}}$}\settowidth{\leftstackrelbwd}{$#2$}%
\addtolength{\leftstackrelawd}{-\leftstackrelbwd}%
\leavevmode\ifthenelse{\lengthtest{\leftstackrelawd>0pt}}%
{\kern-.5\leftstackrelawd}{}\mathrel{\mathop{#2}\limits^{#1}}}

\begin{document}

\title{On robustness, input-to-state stability and backstepping for stochastic differential equations}
\author{Robert H. Moldenhauer, Dragan Nešić \IEEEmembership{Fellow, IEEE},  Mathieu Granzotto, Romain Postoyan \IEEEmembership{Senior Member, IEEE}, and Andrew R. Teel  \IEEEmembership{Fellow, IEEE}
\thanks{
This work was supported by the ARC under the Discovery Project DP210102600, by the ANR grant OLYMPIA ANR-23-CE48-0006, the AFOSR grant FA9550-25-1-0186 and the
ARO grant W911NF-26-1-0001.}
\thanks{R. Moldenhauer, D. Nešić and M. Granzotto are with the Department of Electrical and Electronic Engineering, University of Melbourne, Parkville, VIC 3010, Australia (e-mail: moldenhauer.r@student.unimelb.edu.au, mathieu.granzotto@unimelb.edu.au, dnesic@unimelb.edu.au).}
\thanks{R. Moldenhauer and R. Postoyan are with the Université de Lorraine, CNRS, CRAN, F-54000 Nancy, France (emails: \{name.surname\}@univ-lorraine.fr).}
\thanks{A. Teel is with the Electrical and Computer Engineering Department, University of California, Santa Barbara, CA 93106 USA (e-mail: teel@ece.ucsb.edu).}
}
\maketitle

\begin{abstract}
    We study conditions under which stability of the origin of stochastic differential equations is robust to small perturbations.
    We express robustness in two ways, firstly in the sense that stochastic stability is maintained under small parametric perturbations not exceeding a state-dependent bound vanishing at the origin but positive elsewhere, and secondly via stochastic input-to-state stability (ISS) which allows non-zero perturbations everywhere.
    We prove the former property assuming the existence of a Lyapunov function certifying stochastic stability of the nominal system.
    Under the same assumption, stochastic ISS holds under a suitable state-dependent perturbation scaling.
    Stochastic exponential stability is maintained under proportionally bounded perturbations and implies exponential ISS even without perturbation scaling.
    Finally, we propose a novel approach to stochastic integrator backstepping in pure-feedback form that uses the tools from our robustness analysis.
\end{abstract}

\begin{IEEEkeywords}
Backstepping, input-to-state stability, robustness, stochastic differential equations, Lyapunov stability
\end{IEEEkeywords}

\begin{center}
    {\small This work has been submitted to the IEEE for possible publication.
    Copyright may be transferred without notice, after which this
    version may no longer be accessible.}
\end{center}

\section{Introduction}
Robust stability is a property of dynamical systems that characterizes stability under perturbations and uncertainty.
One way to express robustness is that stability is maintained under perturbations not exceeding a state-dependent bound 
vanishing at the equilibrium and positive everywhere else.
For ordinary differential equations, it was shown in \cite{lin1996smooth}, using a smooth converse Lyapunov function theorem, that global asymptotic stability (GAS) of the origin exhibits this form of robustness under mild regularity conditions.
An alternative way to express robustness is through input-to-state stability (ISS), which allows non-vanishing perturbations at the equilibrium.
In \cite{sontag_further_1990} it was shown that GAS of the origin for the nominal (unperturbed) system implies ISS when the input is scaled with a state-dependent function, which is referred to as \textit{weak ISS}\footnote{To be precise, \cite[Theorem 1]{sontag_further_1990} states that \emph{stabilizability} implies weak input-to-state \emph{stabilizability}. However, this is proved by reduction to the case where the stabilizing feedback policy is zero, and no change of this policy is required to achieve weak ISS.
This is in contrast to \cite{sontag1989smooth} where, for control affine systems, it was shown that stabilizability implies input-to-state stabilizability without the need of state-dependent scaling, but generally requiring a change of feedback policy, as nominal GAS does not imply ISS.}.

\omitcustom{
Backstepping is a recursive nonlinear control design procedure for interconnected (cascaded?) systems, see, e.g., \cite{krstic1995nonlinear} (I am struggling with this sentence).
Sontag's result on weak ISS \cite{sontag_further_1990} can be used for integrator (also more general?) backstepping (cite who did that).
The idea is to exploit that the final system in the cascade, if stabilizable, is ISS under a state-dependent input scaling.
By choosing a control law that renders the dynamics of the scaled input variable GAS, a cascade of a GAS and ISS system is obtained, which is GAS as a special case of the ISS small-gain theorem \cite{jiang_small-gain_1994}.
}

The goal of this paper is to study these robustness concepts for stochastic differential equations (SDEs) and apply them to backstepping.
For (discrete-time) stochastic difference inclusions, a very well-rounded treatment of converse theorems, robustness and ISS was achieved in \cite{teel_equivalent_2014, teel_matrosov_2013, subbaraman_converse_2013}.
There, smooth converse theorems and robustness are proved for the notions of GAS in probability (GASp) and recurrence under only mild regularity conditions.
A stochastic version of the weak ISS result of \cite{sontag_further_1990} follows from \cite{teel_equivalent_2014}.
For (continuous-time) SDEs the literature is more sparse, especially due to the lack of a smooth converse Lyapunov function theorem.
While a converse theorem was proved in the 1960s in \cite{kushner1967converse}, the author of \cite{kushner1967converse} explicitely states in his survey \cite{kushner_partial_2014} that the Lyapunov function is not claimed to be smooth.
In \cite[Remark 5.5]{khasminskii_stochastic_2012} an example of an SDE is given where the origin is GASp, but no Lyapunov function differentiable at the origin exists.
A converse theorem that achieves a Lyapunov function $\mc C^2$ everywhere except maybe the origin is given in \cite[Theorem 5.12]{khasminskii_stochastic_2012} for stochastic exponential stability, but remains, to the best of the authors' knowledge, an open question for the weaker notion of GASp.

In this paper, we sidestep this gap by assuming the existence of a $\mc C^2$ (except maybe at the origin) Lyapunov function.
We then prove GASp for perturbed dynamics under a state-dependent perturbation bound vanishing at the origin, as well as a weak stochastic ISS property.
If a $\mc C^2$ (except maybe at the origin) converse theorem holds, then our assumption can be weakened to nominal GASp.
Note that if the nominal system is deterministic, then the converse theorem in \cite{lin1996smooth} becomes applicable, and it follows from our result that nominal GAS implies GASp under sufficiently small stochastic perturbations.
Furthermore, we prove that, under stronger assumptions, the allowed perturbation bound is at least proportional to the state near the origin.
Strengthening this even more, the setting of exponential stability is reached, where the aforementioned converse theorem \cite[Theorem 5.12]{khasminskii_stochastic_2012} applies.
Then we show that stochastic exponential stability is maintained under a globally proportional perturbation bound and implies a stochastic exponential ISS property.


Finally, we present a novel approach for backstepping of stochastic pure-feedback\footnote{That is, virtual control variables enter the next subsystem nonlinearly.} systems.
Stochastic backstepping in strict-feedback\footnote{That is, virtual control variables enter the next subsystem affinely.} form was first formulated in \cite{deng_stochastic_1997}.
The author of \cite{tsinias_stochastic_1998} constructs stabilizing controls for interconnections of systems that are not necessarily control-affine under stochastic ISS assumptions.
Adaptive neural control of stochastic pure-feedback systems was also done in, e.g., \cite{wang_iss-modular_2006, wang_robust_2013} under a uniform positivity assumption on control sensitivities, see \cite[Assumption 1]{wang_iss-modular_2006}.
In this paper, we aim to achieve integrator backstepping 
where the integrator state can enter the second subsystem non-affinely unlike \cite{deng_stochastic_1997}, and without the assumptions on ISS and control sensitivities of \cite{tsinias_stochastic_1998} and \cite{wang_iss-modular_2006,wang_robust_2013}, respectively.
For deterministic systems, the value of weak ISS for backstepping in pure-feedback form was shown in \cite{sontag1989remarks}.
There, under the aforementioned state-dependent scaling, a cascade of a GAS system and an ISS system is obtained, which is GAS by a special case of the ISS small gain theorem \cite{jiang_small-gain_1994}.
However, this fails for SDEs because the cascade of a GASp (or even GAS) and a stochastic ISS system is not necessarily GASp; see \cite[Proposition 6]{teel_converse_2014} for a counterexample in discrete time that can also be adapted to continuous time.
Instead, we make the same additional assumptions as for our stronger robustness result and construct a Lyapunov function for the cascade.
While this does not directly utilize our results on robustness and stochastic ISS, it still relies on a mean-value-theorem approach used for our stronger robustness result and commonly seen in backstepping.
Furthermore, we employ a state-dependent scaling in the Lyapunov function construction, which fulfills a role very similar to the scaling in weak ISS.

The rest of this paper is structured as follows.
In Section \ref{s:preliminaries} we recall the definitions of solutions as well as GASp and exponential $p$-stability for SDEs.
After stating the problem in Section \ref{s:problem_statement}, the main results on robustness and stochastic ISS are provided in Sections \ref{s:robustness} and \ref{s:weakISSp}, respectively.
Section \ref{s:backstepping} deals with backstepping.
Section \ref{s:proofs} contains some of the proofs, while Section \ref{s:conclusion} concludes the paper.

\textbf{Notation.}
We denote the closed and open unit balls of the considered space as $\overline\B$ and $\B$, respectively. The trace of a square matrix is denoted by $\text{Tr}$.
The notation $|\cdot|$ denotes the Euclidean norm for vectors and the spectral norm (largest singluar value) for matrices.
A function $\alpha:\R_{\geq0}\to\R_{\geq0}$ is said to be \emph{of class $\mc K_\infty$}, written as $\alpha\in\mc K_\infty$, if it is $0$ at $0$, continuous, strictly monotone increasing and unbounded.
A function $\varrho:\R^n\to\R_{\geq0}$ is \emph{positive definite}, written as $\varrho\in\mc{PD}$, if it is continuous, $0$ at $0$ and positive everywhere else.
We denote by $\mc C^2(\R^n)$ (resp. $\mc C_0^2(\R^n)$) the set of continuous functions $V:\R^n\to\R$ that are twice continuously differentiable on $\R^n$ (resp. $\R^n\setminus\{0\}$).
The expectation of a random variable is denoted by $\E[\cdot]$.


\newpage
\section{Preliminaries}\label{s:preliminaries}
\subsection{Strong solutions of SDEs}
In this section we recall from \cite{karatzas_brownian_1998} the notion of (strong) solutions for general time-varying SDEs
\begin{align}
    \td x = \overline f(t,x)\td t + \overline g(t,x)\td w,\label{eq:generalSDE}
\end{align}
where $x\in\R^n$ is the state and $\overline f:\R_{\geq0}\times\R^n\to\R^n$ and $\overline g:\R_{\geq0}\times\R^n\to\R^{n\times p}$ are measurable and locally Lipschitz in $x$ uniformly in $t$, that is, for any $R>0$ there exists $L_R\in\R_{\geq0}$ such that
    $|\overline f(t,x)-\overline f(t,y)|+|\overline g(t,x)-\overline g(t,y)| \leq L_R|x-y|$
for all $x,y\in R\overline\B$ and $t\geq0$.
Further suppose that $\overline f(t,0)=0$ and $\overline g(t,0)=0$ for all $t\geq0$, so that the origin is an equilibrium of \eqref{eq:generalSDE}.
Let $(\Omega,\mc F,\P)$ be a probability space with a filtration $\mc F_t, t\geq0$, and $w_t:\Omega\to\R^r, t\geq0$ an $r$-dimensional Brownian motion adapted to the filtration $\mc F_t, t\geq0$.


\begin{defn}\cite[Definition 5.2.1]{karatzas_brownian_1998}
    A \emph{(strong) solution} of \eqref{eq:generalSDE} with respect to the fixed Brownian motion $w_t$, initial state $\xi\in\R^n$ and initial time $t_0\geq0$ is a stochastic process $x_t:\Omega\to\R^n, t\geq t_0$
    with continuous sample paths (i.e., $t\mapsto x_t(\omega)$ is continuous for all $\omega\in\Omega$) such that the following conditions hold.
    \begin{enumerate}[label=(\roman*)]
        \item $x_t,t\geq t_0$ is adapted to the filtration $\mc F_t, t\geq t_0$,
        \item $\P[x_{t_0}=\xi]=1$,
        \item $\P[\int_{t_0}^t|\overline f_i(s,x_s)+\overline g_{ij}(s,x_s)|^2ds<\infty]=1$ holds for every $1\leq i\leq n$ and $1\leq j\leq r$ and $t\geq t_0$, and
        \item  the integral version
            $x_t = x_0 + \int_{t_0}^t\overline f(s,x_s)\td s + \int_{t_0}^t \overline g(s,x_s)\td w_s$ 
        of \eqref{eq:generalSDE} holds almost surely for any $t\geq t_0$.
    \end{enumerate}
        We say that a solution $x_t$ is \emph{unique} if $\P[x_t=\widetilde x_t~\forall t\geq t_0]=1$ for any solution $\widetilde x_t$.
        If a unique solution $x_t$ exists from an initial state $x_0$ and initial time $t_0$, we denote it by $\varphi(t;x_0,t_0)$.
\end{defn}
Note that we only consider complete solutions, that is, solutions defined for all $t\geq t_0$.
The local Lipschitz condition on $\overline f$ and $\overline g$ we made above is sufficient for uniqueness, as stated in \cite[Theorem 2.5]{karatzas_brownian_1998}.
According to \cite[Theorem 2.9]{karatzas_brownian_1998}, global Lipschitz continuity and a linear growth condition are sufficient for existence.

\subsection{Stability notions}
We now define GASp by following \cite[Sections 5.3, 5.4; Definition 5.1]{khasminskii_stochastic_2012}.

\begin{defn}\label{def:GASp}
    Suppose the SDE \eqref{eq:generalSDE} admits a solution for any initial state and initial time.
    Then the origin is said to be \emph{stable in probability} for \eqref{eq:generalSDE} if for any $\varepsilon>0$, $\eta>0$ and $t_0\geq0$ there exists $r>0$ such that for any initial state $x_0\in\R^n$ with $|x_0|<r$,
        $\P\left[\sup_{t\geq t_0}|\phi(t;x_0,t_0)|>\varepsilon\right] < \eta.$
    
    The origin is said to be \emph{globally asymptotically stable in probability (GASp)} for \eqref{eq:generalSDE} if it is stable in probability and, for all $x_0\in\R^n$ and $t_0\geq0$,
    $\P\left[\lim_{t\to\infty}\phi(t;x_0,t_0)=0\right]=1.$
\end{defn}

See \cite[Theorem 5.8]{khasminskii_stochastic_2012} for a Lyapunov characterization of GASp.
Furthermore, we consider the stronger notion of exponential $p$-stability, defined in \cite[Section 5.7]{khasminskii_stochastic_2012} as follows.
\begin{defn}\label{def:exponential_stability}
    Suppose the SDE \eqref{eq:generalSDE} admits a solution for any initial state and initial time.
    Then the origin is said to be \emph{exponentially $p$-stable} ($p>0$), if there exist $A,\alpha>0$ such that
    for any initial state $x_0\in\R^n$, initial time $t_0\geq0$ and time $t\geq t_0$,
        $\E\left[|\phi(t;x_0,t_0)|^p\right] \leq A|x_0|^p\exp(-\alpha (t-t_0))$.
\end{defn}
See \cite[Theorem 5.11]{khasminskii_stochastic_2012} for a Lyapunov theorem for exponential $p$-stability, and \cite[Theorem 5.12]{khasminskii_stochastic_2012} for a converse result.




\section{Problem statement} \label{s:problem_statement}

We now consider SDEs
\begin{align}
    \td x = f(x,u)\td t + g(x,u)\td w,\label{eq:SDE_system}
\end{align}
where $x\in\R^n$ is the state and $u\in\R^m$ represents a disturbance input or parametrizes model uncertainty.
Let $(\Omega,\mc F,\P)$, $\mc F_t$ and $w_t, t\geq0$, be as in Section \ref{s:preliminaries}.
Assume that $f:\R^n\times\R^m\to\R^n$ and $g:\R^n\times\R^m\to\R^{n\times p}$ are locally Lipschitz in their arguments.
Further assume that $f(0,0)=0$ and $g(0,0)=0$, so that the origin is an equilibrium when $u=0$.
The goal of this paper is to characterize stability of perturbed SDEs
\begin{align}
    \td x = f(x,\mf u(t,x))\td t + g(x, \mf u(t,x))\td w,\label{eq:perturbed_SDE}
\end{align}
where $\mf u(t,x)$ expresses the disturbance as function of time and state,
based on stability of the nominal (unperturbed) SDE
\begin{align}
    \td x = f(x,0)\td t + g(x,0)\td w \label{eq:SDE:zero_input}
\end{align}
when $\mf u(t,x)$ is sufficiently small.
In Section \ref{s:robustness} we give results where asymptotic stability is maintained for $\mf u(t,x)$ vanishing at $x=0$, while Section \ref{s:weakISSp} gives ISS guarantees that allow nonzero $\mf u(t,x)$ everywhere.
The majority of the proofs are given in Section \ref{s:proofs}.





In the remainder of this paper, we frequently use the differential operator $\mc L_u$ for system \eqref{eq:SDE_system}, defined as
\begin{align}
        \mc L_uV(x) &:= \tfrac{\del V(x)}{\del x}f(x,u)+ \tfrac12 \text{Tr}\left(g(x,u)^\top\tfrac{\del^2V(x)}{\del x^2}g(x,u)\right)
    \end{align}
for $V\in\mc C^2_0(\R^n)$, control input $u\in\R^m$ and state $x\in\R^n\setminus\{0\}$.

\section{Robustness of stochastic stability}\label{s:robustness}
We consider three cases with progressively stronger assumptions.
Section \ref{s:robustness_general} is the most general setting, where we show that existence of a $\mc C_0^2$ stochastic Lyapunov function (SLF) for \eqref{eq:SDE:zero_input} implies GASp of \eqref{eq:perturbed_SDE} if $|\mf u(t,x)|\leq\delta(x)$ for all $t\in\R_{\geq0}$ and $x\in\R^n$ for some sufficiently small $\delta\in\mc{PD}$.
While it would be preferable to directly assume GASp of \eqref{eq:SDE:zero_input} instead of its Lyapunov characterization, this would rely on a $\mc C_0^2$ converse theorem for GASp, which, to the best of the authors' knowledge, is an open question.
However, an important special case is when \eqref{eq:SDE:zero_input} is deterministic, i.e., $g(x,0)\equiv0$, in which case the converse theorem of \cite{lin1996smooth} can be applied and GAS(p) of $\td x = f(x,0)\td t$ implies GASp of \eqref{eq:perturbed_SDE} for $|\mf u(t,x)|\leq \delta(x)$ for all $t\in\R_{\geq0}$ and $x\in\R^n$, which we state in a corollary.

In Section \ref{s:stronger} we assert that, under stronger assumptions involving at least proportional Lyapunov decrease locally near the origin, the perturbation bound $\delta$ can also be proportional around the origin, that is, $\limsup_{x\to0}\delta(x)/|x|>0$.

Finally, in Section \ref{s:exponential} we strengthen the approach of Section \ref{s:stronger} from local to global via the notion of exponential $p$-stability as in Definition \ref{def:exponential_stability}.
Thanks to the converse theorem \cite[Theorem 5.12]{khasminskii_stochastic_2012}, we show that, under additional regularity of $f$ and $g$, exponential $p$-stability of \eqref{eq:SDE:zero_input} implies exponential $p$-stability of \eqref{eq:perturbed_SDE} under a proportional bound $|\mf u(t,x)|\leq d|x|$ for all $t\in\R_{\geq0}$ and $x\in\R^n$ for some sufficiently small $d>0$.

\subsection{Robustness of GASp (general case)}\label{s:robustness_general}
    
We assume the existence of a SLF for \eqref{eq:SDE:zero_input}.

\begin{assm}\label{assm:stability}
    Consider the SDE \eqref{eq:SDE:zero_input} and suppose there exist functions $V\in\mc C_0^2(\R^n),\alpha_1,\alpha_2\in\mc K_\infty$ and $\varrho\in\mc{PD}$ such that
    \begin{align}
        \alpha_1(|x|) \leq V(x) &\leq \alpha_2(|x|)\quad\forall x\in\R^n,\label{eq:assm:stability:1}\\
        \mc L_0V(x) &\leq -\varrho(x)~\quad\forall x\in\R^n\setminus\{0\}.\label{eq:assm:stability:2}
    \end{align}
\end{assm}
Note that $x=0$ is intentionally excluded in \eqref{eq:assm:stability:2} as well as the differentiability requirement of $V$, as \cite[Remark 5.5]{khasminskii_stochastic_2012} gives an example of an SDE where the origin is GASp, but no SLF twice continuously differentiable at the origin exists.


The next proposition extends the Lyapunov decrease \eqref{eq:assm:stability:2} to $\mc L_uV(x)\leq-\rho(x)$ for any $\rho\in\mc{PD}$ smaller than $\varrho$, as long as $u$ is small enough, expressed in the form $|u|\leq\delta(x)$ for some $\delta\in\mc{PD}$ depending on $\rho$.
Its proof is given in Section \ref{s:proof:prop:main}.

\begin{prop}\label{thm:main}
    Suppose Assumption \ref{assm:stability} holds.
    Then for any $\rho\in\mc{PD}$ with $\varrho-\rho\in\mc{PD}$
    there exists $\delta\in\mc{PD}$ such that
    \begin{align}
        \max_{|u|\leq\delta(x)}\mc L_uV(x) \leq -\rho(x)\quad\forall x\in\R^n\setminus\{0\}. \label{eq:lem:main}
    \end{align}
\end{prop}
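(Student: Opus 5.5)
Throughout the plan, write $\sigma:=\varrho-\rho\in\mc{PD}$. The idea is a continuity-and-compactness argument on shells away from the origin. For fixed $x\neq0$, the map $u\mapsto\mc L_uV(x)$ is continuous, because $f$ and $g$ are continuous and $\del V/\del x$, $\del^2V/\del x^2$ are continuous on $\R^n\setminus\{0\}$. At $u=0$ it satisfies $\mc L_0V(x)\leq-\varrho(x)=-\rho(x)-\sigma(x)$, which leaves a strict margin $\sigma(x)>0$. We therefore want, for each $x\neq0$, a radius of $u$ on which the increase $\mc L_uV(x)-\mc L_0V(x)$ stays below $\sigma(x)$. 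This radius must then be made continuous and positive definite in $x$, and must vanish at $0$.

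\textbf{Step 1: uniform moduli on annuli.} For $k\in\mathbb{Z}$, consider the compact annuli $A_k:=\{x:2^{k-1}\leq|x|\leq2^{k+1}\}$, which avoid the origin and cover $\R^n\setminus\{0\}$. Let $s_k:=\min_{x\in A_k}\sigma(x)>0$. The function $(x,u)\mapsto\mc L_uV(x)-\mc L_0V(x)$ is continuous on $A_k\times\overline\B$ and vanishes at $u=0$, so it is uniformly continuous there. Hence there is $\delta_k\in(0,1]$ such that $|\mc L_uV(x)-\mc L_0V(x)|\leq s_k$ for all $x\in A_k$ and $|u|\leq\delta_k$. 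For any such $x,u$ this gives $\mc L_uV(x)\leq-\varrho(x)+s_k\leq-\varrho(x)+\sigma(x)=-\rho(x)$.

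\textbf{Step 2: continuous positive definite $\delta$.} Set $\hat\delta_k:=\min\{\delta_{k-1},\delta_k,\delta_{k+1},2^{k}\}>0$, and define $\delta$ on $2^k\leq|x|\leq2^{k+1}$ by linear interpolation in $|x|$ between $\hat\delta_k$ and $\hat\delta_{k+1}$. This $\delta$ is continuous on $\R^n\setminus\{0\}$ and strictly positive there. Because it is bounded by $2^{k+1}\leq 2|x|$ on each shell, it tends to $0$ as $x\to0$; we set $\delta(0)=0$, so $\delta\in\mc{PD}$. On the shell $2^k\leq|x|\leq2^{k+1}\subset A_k$ we have $\delta(x)\leq\max\{\hat\delta_k,\hat\delta_{k+1}\}\leq\delta_k$, since both $\hat\delta_k$ and $\hat\delta_{k+1}$ include $\delta_k$ in their minimum. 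Step 1 then gives \eqref{eq:lem:main}. The maximum in \eqref{eq:lem:main} is attained because the set of admissible $u$ is compact and $\mc L_uV(x)$ is continuous in $u$.

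\textbf{Main obstacle.} The only delicate point is uniformity. We need the uniform-continuity modulus and the margin $s_k$ on compact sets bounded away from $0$, since $V$ need not be differentiable at $0$. We also need the neighbouring-index minimum so that the interpolated $\delta$ respects every annulus it touches. Nothing beyond the continuity of $f$, $g$ and the derivatives of $V$ away from the origin is needed; in particular, local Lipschitz continuity is not used here.
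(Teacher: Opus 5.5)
Your argument is correct and is essentially the paper's: uniform continuity of $(x,u)\mapsto\mc L_uV(x)$ on compact annuli avoiding the origin, with the margin $\min(\varrho-\rho)$ taken over the annulus, followed by a continuous positive definite minorant of the admissible radii that is forced to vanish at $0$ (your bound $\delta(x)\leq 2|x|$ plays the role of the $s$ in the paper's $\min\{s,\dots\}$). The only difference is how $\delta$ is assembled, and what that buys: the paper uses the nested rings $\{\min\{1,R\}\leq|x|\leq\max\{1,R\}\}$, takes the \emph{largest} admissible radius $\delta_R$ for the pointwise condition $\mc L_uV(x)-\mc L_0V(x)\leq\varrho(x)-\rho(x)$, deduces monotonicity in $R$ and smooths by integral averaging, whereas your dyadic annuli with neighbouring minima and piecewise-linear interpolation need no maximality, monotonicity or integration, at the cost that the paper reuses its maximal choice in the proof of Theorem~\ref{thm:stronger} to get $\delta_R\geq bR$, which your construction would give only if you also took $\delta_k$ maximal for the pointwise (rather than uniform-margin) condition.
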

We now state the main theorem, which guarantees robustness of GASp to perturbations $\mf u(t,x)$ with $|\mf u(t,x)|\leq\delta(x)$ for some $\delta\in\mc{PD}$, and is proved by combining Proposition \ref{thm:main} with the Lyapunov theorem \cite[Theorem 5.8]{khasminskii_stochastic_2012}.

\begin{thm}\label{thm:1.robustness}
    Suppose that Assumption \ref{assm:stability} holds. Then there exists $\delta\in\mc{PD}$ such that
    the origin is GASp for \eqref{eq:perturbed_SDE},
    where $\mf u:\R_{\geq0}\times\R^n\to\R^m$ is any function that is measurable, locally Lipschitz in $x$ uniformly in $t$, satisfies $|\mf u(t,x)|\leq\delta(x)$ for all $(t,x)\in\R_{\geq0}\times\R^n$,
    and for which \eqref{eq:perturbed_SDE} admits a unique solution at every initial state and initial time.
    
\end{thm}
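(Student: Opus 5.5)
The plan is to combine Proposition \ref{thm:main} with the Lyapunov theorem \cite[Theorem 5.8]{khasminskii_stochastic_2012}, applied to the time-varying SDE \eqref{eq:perturbed_SDE} viewed as an instance of \eqref{eq:generalSDE}. Set $\rho:=\varrho/2$, so that $\rho\in\mc{PD}$ and $\varrho-\rho=\varrho/2\in\mc{PD}$. Proposition \ref{thm:main} then yields some $\delta\in\mc{PD}$ with $\max_{|u|\leq\delta(x)}\mc L_uV(x)\leq-\rho(x)$ for all $x\neq0$. This is the $\delta$ of the theorem.

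Next, fix any admissible $\mf u$ with $|\mf u(t,x)|\leq\delta(x)$ and define $\overline f(t,x):=f(x,\mf u(t,x))$ and $\overline g(t,x):=g(x,\mf u(t,x))$. We check that these satisfy the standing hypotheses of Section \ref{s:preliminaries}. Measurability holds by composition. Local Lipschitz continuity in $x$, uniform in $t$, holds because $f,g$ are locally Lipschitz and $\mf u$ is locally Lipschitz in $x$ uniformly in $t$. On $R\overline\B$ the values of $\mf u$ lie in the compact set $\max_{|x|\leq R}\delta(x)\,\overline\B$, so a uniform constant exists. The equilibrium condition follows from $\delta(0)=0$, which forces $\mf u(t,0)=0$ and hence $\overline f(t,0)=f(0,0)=0$ and $\overline g(t,0)=0$. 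Existence and uniqueness of solutions is assumed in the statement.

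The generator of \eqref{eq:perturbed_SDE} acting on the time-independent $V$ is $\mc L^{t}V(x)=\mc L_{\mf u(t,x)}V(x)\leq-\rho(x)$ for all $t\geq0$ and $x\neq0$. Together with \eqref{eq:assm:stability:1}, this gives a function $V$ that is positive definite, radially unbounded, decrescent, and $\mc C^2$ away from the origin. Its generator is bounded above by a time-independent negative definite function. These are the hypotheses of \cite[Theorem 5.8]{khasminskii_stochastic_2012}, so GASp follows.

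The main obstacle is matching the precise hypotheses of Khasminskii's theorem. First, it allows $V$ to be nonsmooth at the origin, but requires $\mc LV\leq -c_\epsilon<0$ uniformly on sets $\{\epsilon<|x|<1/\epsilon\}$. This holds since $\rho$ is continuous and positive there, hence bounded below on that compact annulus. Second, its regularity assumptions on the coefficients (local Lipschitz, uniform in $t$) have been verified above. If the cited theorem is formulated only with $\mc C^2$ functions on all of $\R^n$, I would instead argue directly. For stability in probability, apply Itô's formula to $V$ up to the exit time from an annulus and use Dynkin's formula with the supermartingale property. This gives $\P[\sup_t|x_t|>\varepsilon]\leq \alpha_2(|x_0|)/\alpha_1(\varepsilon)$, since the origin is never reached from $x_0\neq0$ in finite time or is absorbing. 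For global attractivity, combine nonnegative supermartingale convergence of $V(x_t)$ with the integrability of $\rho(x_t)$ to conclude that $V(x_t)\to0$ almost surely.
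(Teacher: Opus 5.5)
Your proposal is correct and matches the paper's proof: set $\rho=\varrho/2$, take $\delta$ from Proposition~\ref{thm:main}, bound $\mc L_{\mf u(t,x)}V(x)\le-\varrho(x)/2$ for all $(t,x)$ with $x\neq0$, and invoke \cite[Theorem 5.8]{khasminskii_stochastic_2012}. The paper leaves implicit some points you check explicitly: that $\delta(0)=0$ makes the origin an equilibrium, and that the composed coefficients are locally Lipschitz in $x$ uniformly in $t$. Those checks, and your fallback supermartingale argument, are sound.
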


\begin{proof}
    Let $\delta\in\mc{PD}$ be as in Proposition \ref{thm:main}, when $\rho$ is chosen as $\rho(x):=\varrho(x)/2$.
    By Proposition \ref{thm:main}, 
        $\mc L_{\mf u(t,x)}V(x) \leq \max_{|u|\leq\delta(x)}\mc L_uV(x) \leq -\varrho(x)/2$
    for all $(t,x)\in\R_{\geq0}\times(\R^n\setminus\{0\})$.
    The Lyapunov theorem \cite[Theorem 5.8]{khasminskii_stochastic_2012} then implies that the origin is GASp for \eqref{eq:perturbed_SDE}.
\end{proof}

An important special case is when \eqref{eq:SDE:zero_input} is deterministic, i.e., $g(x,0)=0$ for all $x\in\R^n$.
Then GASp as in Definition \ref{def:GASp} of $\td x = f(x,0)\td t$ is equivalent to the standard (uniform) GAS notion \cite[Definition 2.2]{lin1996smooth} of the ordinary differential equation $\dot x = f(x,0)$; note that attractivity implies uniform attractivity for autonomous systems \cite[Proposition 2.14]{bacciotti_liapunov_2005}.
Then the smooth converse theorem \cite[Theorem 2.8]{lin1996smooth} can be applied to obtain the following corollary of Theorem \ref{thm:1.robustness}.

\begin{corollary}\label{cor:deterministic_nominal}
    Consider the SDE \eqref{eq:SDE_system} and suppose $g(x,0)=0$ for all $x\in\R^n$ and that the origin is GAS(p) for \eqref{eq:SDE:zero_input}.
    Then Assumption \ref{assm:stability} and the statement of Theorem \ref{thm:1.robustness} hold.
\end{corollary}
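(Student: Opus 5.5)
The plan is to reduce the corollary to Theorem \ref{thm:1.robustness}: we only have to produce a Lyapunov function $V$ satisfying Assumption \ref{assm:stability}, and the theorem then gives the remaining conclusion. Since $g(x,0)=0$ for all $x$, the nominal SDE \eqref{eq:SDE:zero_input} reduces to the ordinary differential equation $\dot x=f(x,0)$, whose sample paths are deterministic. Its unique solution is therefore the deterministic flow, so every probability in Definition \ref{def:GASp} is either $0$ or $1$.

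\textbf{Step 1: translate GASp into deterministic GAS.} Stability in probability with some $\eta<1$ yields, for each $\varepsilon>0$, an $r>0$ with $\sup_{t\ge t_0}|\phi(t;x_0,t_0)|\le\varepsilon$ whenever $|x_0|<r$. This is Lyapunov stability of $\dot x=f(x,0)$; because the system is autonomous, $r$ does not depend on $t_0$. Almost sure convergence to $0$ likewise becomes convergence of every trajectory, i.e.\ global attractivity. Stability together with global attractivity gives uniform GAS in the sense of \cite[Definition 2.2]{lin1996smooth}, since for autonomous systems attractivity implies uniform attractivity \cite[Proposition 2.14]{bacciotti_liapunov_2005}. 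Completeness of solutions is part of the standing assumptions.

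\textbf{Step 2: apply the smooth converse theorem.} The vector field $x\mapsto f(x,0)$ is locally Lipschitz, so \cite[Theorem 2.8]{lin1996smooth} applies to $\dot x=f(x,0)$ (with a trivial, e.g.\ singleton, disturbance set). It provides a smooth $V$ and functions $\alpha_1,\alpha_2\in\mc K_\infty$ with $\alpha_1(|x|)\le V(x)\le\alpha_2(|x|)$ and $\tfrac{\del V}{\del x}f(x,0)\le -\alpha_3(|x|)$ for some $\alpha_3\in\mc K_\infty$, possibly after passing from a bound of the form $-V$ by composing with $\alpha_1$. Since $V$ is smooth, $V\in\mc C^2(\R^n)\subset\mc C_0^2(\R^n)$.

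\textbf{Step 3: check Assumption \ref{assm:stability}.} Because $g(x,0)=0$, the trace term in $\mc L_0V$ vanishes, so $\mc L_0V(x)=\tfrac{\del V}{\del x}f(x,0)\le-\varrho(x)$ with $\varrho(x):=\alpha_3(|x|)\in\mc{PD}$. Assumption \ref{assm:stability} therefore holds, and Theorem \ref{thm:1.robustness} yields the claimed $\delta\in\mc{PD}$.

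The main obstacle is Step 1, namely checking that the paper's notion of GASp for \eqref{eq:SDE:zero_input} coincides with the uniform GAS hypothesis in \cite{lin1996smooth}, including the uniformity in initial time and the completeness requirements. Once that match is settled, Steps 2 and 3 are direct.
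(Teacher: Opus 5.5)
Your proposal is correct and follows essentially the same route as the paper: identify GASp of the degenerate SDE with uniform GAS of $\dot x=f(x,0)$ (using autonomy and \cite[Proposition 2.14]{bacciotti_liapunov_2005}), apply the smooth converse theorem \cite[Theorem 2.8]{lin1996smooth}, note that the trace term in $\mc L_0V$ vanishes because $g(\cdot,0)=0$, and conclude via Theorem \ref{thm:1.robustness}. Your Step 1 also writes out the zero--one probability argument that the paper only asserts when it states the equivalence of GASp and GAS for deterministic nominal dynamics.
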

In particular, Corollary \ref{cor:deterministic_nominal} includes systems of the form $\td x = f(x)\td t + u\,\td w$, where $u$ is a ``virtual input" that is artificially introduced to check if stability of $\dot x=f(x)$ implies stochastic stability with noise.

\subsection{Locally proportional perturbation bound under additional assumptions}\label{s:stronger}
We now
show a strengthened version of Theorem \ref{thm:1.robustness} where $\delta(x)$ is at least proportional to $x$ locally around the origin under the following assumption.

\begin{assm}\label{assm:quadratic_rho}
    The functions $f$ and $g$ are continuously differentiable and Assumption \ref{assm:stability} holds such that 
    \begin{align}
        \limsup_{x\to0}\left|\tfrac{\del^i V(x)}{\del x^i}\right|\tfrac{|x|^i}{\varrho(x)}&<\infty,\quad i=0,1,2\label{eq:assm:stronger:2}
    \end{align}
    under the convention $\frac{\del^0V(x)}{\del x^0}=V(x)$.
\end{assm}
Condition \eqref{eq:assm:stronger:2} asserts locally proportional Lyapunov decrease for $i=0$, along with corresponding bounds on the gradient and Hessian of $V$ for $i=1,2$.
\begin{remark}\label{rem:quadratic}
    If $V\in\mc C^2(\R^n)$, i.e., $\mc C^2$ even at 0,
    then
    \begin{align}
        \limsup_{x\to0}\left|\tfrac{\del^iV(x)}{\del x^i}\right||x|^{i-2}<\infty,\quad i=0,1,2.
        \label{eq:remark_quadratic}
    \end{align}
    For $i=2$ this is because the Hessian of $V\in\mc C^2(\R^n)$ is continuous and thereby locally bounded. For $i=1$, \eqref{eq:remark_quadratic} follows from the fundamental theorem of calculus, as $\tfrac{\del V(x)}{\del x} = x^\top V_x(x)$ for all $x\in\R^n$
    with continuous $V_x:\R^n\to\R^{n\times n}$ defined as $V_x(x) = \int_0^1\tfrac{\del^2 V(sx)}{\del x^2}ds$, and similar for $i=0$. 
    By \eqref{eq:remark_quadratic}, the condition $\liminf_{x\to0}\tfrac{\varrho(x)}{|x|^2}>0$ (i.e., locally at least quadratic Lyapunov decrease) is then sufficient for \eqref{eq:assm:stronger:2}.
\end{remark}
Under Assumption \ref{assm:quadratic_rho}, Theorem \ref{thm:1.robustness} is strengthened as follows.

\begin{thm}\label{thm:stronger}
    Suppose Assumption \ref{assm:quadratic_rho} holds.
    Then the statement of Theorem \ref{thm:1.robustness} holds, where $\delta$ can be chosen to satisfy $\liminf_{x\to0}\delta(x)/|x|>0$.
\end{thm}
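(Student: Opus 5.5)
We reduce to the argument behind Theorem \ref{thm:1.robustness}. It suffices to construct $\delta\in\mc{PD}$ with $\liminf_{x\to0}\delta(x)/|x|>0$ such that
\begin{align*}
\max_{|u|\leq\delta(x)}\mc L_uV(x)\leq-\varrho(x)/2 \quad\text{for all } x\neq0 .
\end{align*}
The Lyapunov theorem \cite[Theorem 5.8]{khasminskii_stochastic_2012} then gives GASp exactly as in the proof of Theorem \ref{thm:1.robustness}. Away from the origin we may use Proposition \ref{thm:main} with $\rho=\varrho/2$, which yields some $\delta_1\in\mc{PD}$. Near the origin we build a linear bound $\delta_2(x)=d|x|$ on a ball $r\B$. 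Finally we set $\delta=\min\{\delta_1,\delta_2\}$ on $r\overline\B$ and $\delta=\delta_1$ outside, patched continuously, for instance
\begin{align*}
\delta(x)=\min\{\delta_1(x),\,d|x|+\delta_1(x)\max\{0,|x|-r/2\}\cdot c\}
\end{align*}
with a suitable $c$, or by a simpler saturation argument. Positive definiteness holds and $\liminf_{x\to0}\delta(x)/|x|\geq \min\{d,\liminf_{x\to0}\delta_1(x)/|x|\}$. This is not enough by itself, since $\delta_1$ may be small near the origin. The local bound must therefore be valid on its own, and we take $\delta=d|x|$ on $r\overline\B$ together with a continuous interpolation to $\delta_1$ on an annulus where both bounds hold.

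The core local estimate uses the mean-value theorem on $f$ and $g$, which are $\mc C^1$. For $|x|\leq r$ and $|u|\leq d|x|$ we write
\begin{align*}
f(x,u)-f(x,0)=\int_0^1\tfrac{\del f}{\del u}(x,su)\,ds\;u ,
\end{align*}
so $|f(x,u)-f(x,0)|\leq L|u|$, with $L$ a bound on $\del f/\del u$ over a compact neighborhood. Similarly $|g(x,u)-g(x,0)|\leq L|u|$. In addition, $g(0,0)=0$ and $g\in\mc C^1$ give $|g(x,u)|\leq L(|x|+|u|)$. Then
\begin{align*}
\mc L_uV(x)-\mc L_0V(x)&=\tfrac{\del V}{\del x}\bigl(f(x,u)-f(x,0)\bigr)\\
&\quad+\tfrac12\text{Tr}\bigl(g(x,u)^\top V_{xx}g(x,u)-g(x,0)^\top V_{xx}g(x,0)\bigr).
\end{align*}
The first term is at most $|\del V/\del x|\,L d|x|$. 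Writing the trace difference as $\text{Tr}\bigl((g_u-g_0)^\top V_{xx}g_u+g_0^\top V_{xx}(g_u-g_0)\bigr)$, it is bounded by $C\,|\del^2V/\del x^2|\,L^2 d(1+d)|x|^2$, where $C$ depends only on dimensions.

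Now invoke \eqref{eq:assm:stronger:2} for $i=1,2$. Shrinking $r$, there is $M$ with
\begin{align*}
|\del V/\del x|\,|x|\leq M\varrho(x), \qquad |\del^2V/\del x^2|\,|x|^2\leq M\varrho(x) \quad\text{for } 0<|x|\leq r.
\end{align*}
Hence
\begin{align*}
\mc L_uV(x)\leq-\varrho(x)+\bigl(LMd+CL^2Md(1+d)\bigr)\varrho(x),
\end{align*}
and choosing $d$ small makes the bracket at most $1/2$. Note that $i=0$ is not needed here; it only matters elsewhere in the paper.

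The main obstacle is the bookkeeping of the quadratic noise term. It requires $|g(x,0)|\lesssim|x|$, which follows from $g\in\mc C^1$ and $g(0,0)=0$, and the constant $L$ must be uniform over the compact set $\{|x|\leq r,|u|\leq dr\}$. A second, minor issue is gluing with $\delta_1$ so that $\delta$ is continuous. On the annulus $r/2\leq|x|\leq r$, both $\delta_1$ and $d|x|$ yield decrease $\leq-\varrho/2$, so any continuous function bounded above by $\max\{\delta_1,d|x|\}$ there works. Indeed, the set of admissible $u$ for each bound is a ball, so taking the larger ball is still admissible.
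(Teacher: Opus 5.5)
Your proof is correct. Its analytic core matches the paper's. You use the mean-value (Hadamard) decomposition of $f(x,u)-f(x,0)$ and $g(x,u)-g(x,0)$, the bound $|g(x,0)|\lesssim|x|$, and condition \eqref{eq:assm:stronger:2} for $i=1,2$ only. These give $\mc L_uV(x)-\mc L_0V(x)\leq\varrho(x)/2$ whenever $0<|x|\leq r$ and $|u|\leq d|x|$.

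\textbf{Where you differ.} The difference is in how the final $\delta$ is assembled.

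The paper builds no new function. It reopens the proof of Proposition \ref{thm:main} and uses that each $\delta_R$ was chosen \emph{maximally}. The local estimate therefore forces $\delta_R\geq bR$ for $R\in(0,1]$, and the averaging construction of $\widetilde\delta,\overline\delta$ inherits $\overline\delta(s)\geq\min\{1,b/2\}s$ near $0$. So the very $\delta$ of Proposition \ref{thm:main} (with $\rho=\varrho/2$) already has the linear lower bound.

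You instead treat Proposition \ref{thm:main} as a black box and glue a separate local linear bound onto its $\delta_1$. This relies on the admissible $u$-sets being balls, so the pointwise maximum of two admissible radii is again admissible. Your route is more modular: it works for any $\delta_1$ satisfying \eqref{eq:lem:main}, without reference to its construction. The paper's route instead shows that the canonical $\delta$ needs no modification.

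\textbf{Presentation issues to fix.}
\begin{itemize}
\item Keep only the max-based gluing. Your first patch formula, $\min\{\delta_1(x),\dots\}$, is bounded by $\delta_1$ and cannot give the $\liminf$, as you note yourself.
\item The text also switches between setting $\delta=d|x|$ on all of $r\overline\B$ and interpolating on $r/2\leq|x|\leq r$. A clean choice is $\delta:=\max\{\delta_1,\psi\}$ with $\psi(x):=d\min\{|x|,\max\{r-|x|,0\}\}$. This $\delta$ is continuous and positive definite, and it is at least $d|x|$ on $\tfrac r2\overline\B$. It is admissible because $\psi(x)\leq d|x|$ on $r\overline\B$ and $\psi=0$ outside.
\item Fix $d\leq1$ before computing $L$ on $\{|x|\leq r,\,|u|\leq r\}$. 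This avoids $L$ depending circularly on $d$, which the paper handles the same way via $u\in\overline\B$ and $b\leq 1/4$.
\end{itemize}
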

In the following, we derive the main idea to prove Theorem~\ref{thm:stronger}, which is also the approach for stochastic backstepping in Section \ref{s:backstepping}.
Under Assumption \ref{assm:quadratic_rho}, 
$f$ and $g$ can be written~as
\begin{align}
    f(x,u) &= f(x,0) + f_u(x,u)u,\label{eq:stronger:fu}\\
    g(x,u) &= g(x,0) + \begin{bmatrix}g_u^1(x,u)u&\dots&g_u^p(x,u)u\end{bmatrix}\negthickspace,
    \label{eq:stronger:gu}\\
    g(x,0) &=  \begin{bmatrix}g_x^1(x)x&\dots&g_x^p(x)x\end{bmatrix}\negthickspace,
    \label{eq:stronger:gu0}
\end{align}
for all $(x,u)\in\R^n\times\R^m$ with continuous functions $f_u, g_u^i:\R^n\times\R^m\to\R^{n\times m}$ and $g_x^i:\R^n\to\R^{n\times m}$, where $i\in\{1,\dots,p\}$.
This is commonly used in backstepping, see, e.g., \cite{krstic1995nonlinear,wang_iss-modular_2006}, and also see Hadamard's lemma.
The function $f_u$ can be constructed as $f_u(x,u):=\int_0^1 \tfrac{\del f(x,su)}{\del u}ds$, where \eqref{eq:stronger:fu} follows from the fundamental theorem of calculus.
$g_u^i(x,u)$ and $g_x^i(x)$ are constructed similarly for each column of $g(x,u)$ and $g(x,0)$.
For brevity, we write
    $g(x,u) = g(x,0) + g_u(x,u)\otimes u$
and $g(x,0) = g_x(x)\otimes x$, where $\otimes$ denotes the tensor product.
With that, we have that for $x\in\R^n\setminus\{0\}$ and $u\in\R^m$,
\begin{align*}
    \mc L_uV(x)
    &= \tfrac{\del V(x)}{\del x}f(x,u)\nonumber
        + \tfrac12 \text{Tr}\left(g(x,u)^\top\tfrac{\del^2V(x)}{\del x^2}g(x,u)\right)\\
    &= \tfrac{\del V(x)}{\del x}(f(x,0) + f_u(x,u)u)\nonumber\\
    &~~~~+ \tfrac12 \text{Tr}\Big(\left(g(x,0) + g_u(x,u)\otimes u\right)^\top\tfrac{\del^2V(x)}{\del x^2}\\
    &~~~~~~~~~~~~~~~~~~~~~~~~~~~~~~\left(g(x,0) + g_u(x,u)\otimes u\right)\Big)\\
    &= \mc L_0V(x) + \tfrac{\del V(x)}{\del x}f_u(x,u)u\\
    &~~~~+\text{Tr}\left( (g_x(x)\otimes x)^\top \tfrac{\del^2 V(x)}{\del x^2}(g_u(x,u)\otimes u)\right)\\
    &~~~~+\tfrac12 \text{Tr}\left( (g_u(x,u)\otimes u)^\top \tfrac{\del^2 V(x)}{\del x^2} (g_u(x,u)\otimes u)\right)\\
    &= \mc L_0V(x) + \tfrac{\del V(x)}{\del x}f_u(x,u)u\\
    &~~~~+\sum_{i=1}^p x^\top g_x^i(x)^\top \tfrac{\del^2 V(x)}{\del x^2}g_u^i(x,u)u\\
    &~~~~+\sum_{i=1}^p u^\top g_u^i(x,u)^\top\tfrac{\del^2 V(x)}{\del x^2}g_u^i(x,u)u\numberthis\label{eq:LuV_stronger}\\
    &= \mc L_0 V(x) + O\left(\varrho(x)\left(\tfrac{|u|}{|x|}+\tfrac{|u|^2}{|x|^2}\right)\right) \label{eq:LuV_stronger_final}\numberthis
\end{align*}
in Landau notation as $(x,u)\to0$ by \eqref{eq:assm:stronger:2}.
This, along with \eqref{eq:assm:stability:2}, enables the locally proportional perturbation bound.
See Section \ref{s:proof:thm:stronger} for the details of the proof of Theorem \ref{thm:stronger}.


\subsection{Robust exponential stability with globally proportional perturbation bound}\label{s:exponential}
We further strengthen the methods of Section \ref{s:stronger} to prove robustness of $p$-exponential stability for $p>1$ with a globally proportional perturbation bound.
Thanks to the converse theorem \cite[Theorem 5.12]{khasminskii_stochastic_2012} we can directly assume exponential $p$-stability, with regularity of $f$ and $g$ as follows.
\begin{assm}\label{assm:exp_stability}
    The origin of \eqref{eq:SDE:zero_input} is exponentially $p$-stable ($p>1$) and $f$ and $g$ have continuous bounded first and second derivatives.
\end{assm}
Assumption \ref{assm:exp_stability} implies the existence of a SLF with a bound for $\mc L_uV(x)$ similar to \eqref{eq:LuV_stronger_final} with $\varrho(x) = |x|^p$ as follows.
\begin{prop}\label{lem:exponential}
    Suppose that Assumption \ref{assm:exp_stability} holds\footnote{We require $p>1$ to prove \eqref{eq:def:expSLF_4}, whereas \eqref{eq:def:expSLF_1}-\eqref{eq:def:expSLF_3} can be achieved with any $p>0$, see \cite[Theorem 5.12]{khasminskii_stochastic_2012}.}.
    Then there exist $V\in\mc C_0^2(\R^n)$ and $a_1,a_2,a_3,a_4,C>0$ such that
    \begin{align}
        a_1|x|^p \leq V(x) &\leq a_2|x|^p, 
        \label{eq:def:expSLF_1}\\
        \mc L_0V(x) &\leq -a_3|x|^p,
        \label{eq:def:expSLF_2}\\
        \left|\tfrac{\del^iV(x)}{\del x^i}\right| &\leq a_4|x|^{p-i},\quad i=1,2,\label{eq:def:expSLF_3}\\
        \mc L_uV(x) &\leq \mc L_0V(x) + C|x|^{p-2}(|x||u| + |u|^2)\label{eq:def:expSLF_4}
    \end{align}
    hold for all $(x,u)\in(\R^n\setminus\{0\})\times\R^m$.
\end{prop}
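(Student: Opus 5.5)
We take the Lyapunov function $V$ from the converse theorem \cite[Theorem 5.12]{khasminskii_stochastic_2012} and then derive \eqref{eq:def:expSLF_4} from it by an expansion in $u$, in the spirit of \eqref{eq:LuV_stronger}. By Assumption \ref{assm:exp_stability}, the origin of \eqref{eq:SDE:zero_input} is exponentially $p$-stable and $f(\cdot,0)$, $g(\cdot,0)$ have bounded continuous first and second derivatives. Theorem 5.12 then yields $V\in\mc C_0^2(\R^n)$ satisfying \eqref{eq:def:expSLF_1}, \eqref{eq:def:expSLF_2} and \eqref{eq:def:expSLF_3}, with constants $a_1,\dots,a_4$.

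Because the derivatives of $f$ and $g$ are bounded, both maps are globally Lipschitz. We use this in the decomposition \eqref{eq:stronger:fu}--\eqref{eq:stronger:gu0}. The functions $f_u(x,u)=\int_0^1\frac{\del f(x,su)}{\del u}\td s$, $g_u^i$ and $g_x^i$ are continuous and globally bounded by the Lipschitz constant, say by $M$. Here $g(0,0)=0$ is what allows us to write $g(x,0)=g_x(x)\otimes x$ with $|g_x|\le M$.

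With these bounds, we estimate the three remainder terms of \eqref{eq:LuV_stronger} one at a time. This is a global version of \eqref{eq:LuV_stronger_final} and uses \eqref{eq:def:expSLF_3} in place of \eqref{eq:assm:stronger:2}.
\begin{align*}
\left|\tfrac{\del V(x)}{\del x}f_u(x,u)u\right| &\le a_4M|x|^{p-1}|u|,\\
\left|\textstyle\sum_i x^\top g_x^i(x)^\top\tfrac{\del^2V(x)}{\del x^2}g_u^i(x,u)u\right| &\le p\,a_4M^2|x|^{p-1}|u|,\\
\left|\textstyle\sum_i u^\top g_u^i(x,u)^\top\tfrac{\del^2V(x)}{\del x^2}g_u^i(x,u)u\right| &\le p\,a_4M^2|x|^{p-2}|u|^2.
\end{align*}
Adding these and taking $C:=a_4M+p\,a_4M^2$ gives $\mc L_uV(x)\le\mc L_0V(x)+C|x|^{p-2}(|x||u|+|u|^2)$ for all $x\neq0$ and all $u$. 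This is \eqref{eq:def:expSLF_4}.

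The main obstacle is the step that invokes the converse theorem. We must check that \cite[Theorem 5.12]{khasminskii_stochastic_2012} really supplies the derivative bounds \eqref{eq:def:expSLF_3}, and not only \eqref{eq:def:expSLF_1}--\eqref{eq:def:expSLF_2}. We also need to see exactly where $p>1$ enters, since the footnote indicates it is required for \eqref{eq:def:expSLF_4}.

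If Khasminskii's construction $V(x)=\int_0^T\E|\phi(t;x,0)|^p\td t$ only provides the bounds under its own regularity hypotheses, we would verify \eqref{eq:def:expSLF_3} directly. The plan is to differentiate under the expectation using the first and second variation processes of \eqref{eq:SDE:zero_input}, whose moments are controlled by the bounded derivatives of $f$ and $g$. This gives $|\del_x V|\lesssim \E[|\phi|^{p-1}|\del_x\phi|]\lesssim|x|^{p-1}$. The second derivative produces a term $\E[|\phi|^{p-2}|\del_x\phi|^2]$, and this is the place where $p>1$ (together with a lower moment bound on $|\phi|$, or a homogeneity argument) is needed to keep the singular factor $|\phi|^{p-2}$ integrable and to obtain $|\del_x^2V|\lesssim|x|^{p-2}$.

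Once \eqref{eq:def:expSLF_3} is in hand, the expansion above is routine.
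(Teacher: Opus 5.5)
Your proposal is correct and is essentially the paper's proof. The paper also takes \eqref{eq:def:expSLF_1}--\eqref{eq:def:expSLF_3} directly from \cite[Theorem 5.12]{khasminskii_stochastic_2012}, which does provide the derivative bounds under the bounded-derivative hypothesis, so your fallback via variation processes is not needed; it then bounds the remainder terms of \eqref{eq:LuV_stronger} using globally bounded $f_u$, $g_u^i$, $g_x^i$, exactly as you do. The only difference is that the paper bounds the drift term through $\frac{\partial V(x)}{\partial x}=x^\top V_x(x)$ with $|V_x(x)|\le\frac{a_4}{p-1}|x|^{p-2}$, and that is the sole place $p>1$ is used. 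Your direct use of the $i=1$ case of \eqref{eq:def:expSLF_3} shows that \eqref{eq:def:expSLF_4} holds for any $p>0$ once \eqref{eq:def:expSLF_3} is available, so $p>1$ does not enter through the second-variation term as you speculated.
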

See Section \ref{proof:prop:exponential} for the proof. Note that, by \eqref{eq:def:expSLF_2} and \eqref{eq:def:expSLF_3}, Assumption \ref{assm:exp_stability} implies Assumption \ref{assm:quadratic_rho}.
Remark \ref{rem:quadratic} corresponds to the case where $p=2$.
Combining Proposition \ref{lem:exponential} with the Lyapunov theorem \cite[Theorem 5.11]{khasminskii_stochastic_2012} gives the following stronger version of Theorems \ref{thm:1.robustness} and \ref{thm:stronger}.

\begin{thm}
    Suppose Assumption \ref{assm:exp_stability} holds.
    Then the statement of Theorem \ref{thm:1.robustness} holds with exponential $p$-stability instead of GASp and $\delta(s)=ds$ for some $d>0$.
\end{thm}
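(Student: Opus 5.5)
The plan is to combine Proposition \ref{lem:exponential} with the Lyapunov theorem \cite[Theorem 5.11]{khasminskii_stochastic_2012}, in the same way Theorem \ref{thm:1.robustness} was obtained from Proposition \ref{thm:main}. First, take $V$ and the constants $a_1,\dots,a_4,C>0$ from Proposition \ref{lem:exponential}. Let $\mf u$ be any admissible perturbation with $|\mf u(t,x)|\leq d|x|$ for all $(t,x)$, where $d>0$ is to be fixed. Then \eqref{eq:def:expSLF_4} and \eqref{eq:def:expSLF_2} give, for $x\neq0$,
\begin{align*}
    \mc L_{\mf u(t,x)}V(x) &\leq -a_3|x|^p + C|x|^{p-2}\left(|x|\,d|x| + d^2|x|^2\right)\\
    &= -(a_3 - C(d+d^2))|x|^p .
\end{align*}
I would fix $d>0$ small enough that $C(d+d^2)\leq a_3/2$. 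This yields $\mc L_{\mf u(t,x)}V(x)\leq -\tfrac{a_3}{2}|x|^p$ uniformly in $t\geq0$ and $x\neq0$. The point is that both $|u|$ and $|u|^2$ scale with the right powers of $|x|$, so the perturbation term is a small multiple of $|x|^p$ globally, and no smallness of $|x|$ is needed.

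Second, I would check the hypotheses of \cite[Theorem 5.11]{khasminskii_stochastic_2012} for the time-varying SDE \eqref{eq:perturbed_SDE}, with $\overline f(t,x)=f(x,\mf u(t,x))$ and $\overline g(t,x)=g(x,\mf u(t,x))$. These coefficients are measurable and locally Lipschitz in $x$ uniformly in $t$, since $f,g$ are locally Lipschitz and $\mf u$ is locally Lipschitz in $x$ uniformly in $t$. They vanish at $x=0$, because $|\mf u(t,0)|\leq 0$ forces $\mf u(t,0)=0$ and $f(0,0)=0$, $g(0,0)=0$. Solutions exist and are unique by assumption. The theorem then applies with the time-invariant function $V$, the sandwich bound \eqref{eq:def:expSLF_1}, and the decrease $\mc L V\leq -\tfrac{a_3}{2}|x|^p$, and yields exponential $p$-stability with $\delta(s)=ds$.

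The main obstacle is that $V$ is only $\mc C_0^2$, so it may fail to be $C^2$ at the origin. I expect this to be harmless, because Khasminskii's Theorem 5.11 is stated for functions smooth outside the origin, exactly as in his converse Theorem 5.12. If needed, I would argue directly instead. Trajectories starting at $x_0\neq0$ almost surely never reach the origin, which follows from the linear-growth bounds on the coefficients implied by the bounded derivatives of $f,g$ together with $|\mf u|\leq d|x|$. Itô's formula applied to $e^{\lambda t}V(x_t)$ with $\lambda=a_3/(2a_2)$, using localizing stopping times, then gives $\E V(x_t)\leq e^{-\lambda(t-t_0)}V(x_0)$. Combining this with \eqref{eq:def:expSLF_1} gives $A=a_2/a_1$ and $\alpha=\lambda$.
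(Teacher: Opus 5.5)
Your proposal is correct and follows the paper's proof: the paper also takes $V$ and the constants from Proposition \ref{lem:exponential}, chooses $d:=\min\{a_3/(4C),\sqrt{a_3/(4C)}\}$ so that $C(d+d^2)\leq a_3/2$ and hence $\mc L_uV(x)\leq -\tfrac{a_3}{2}|x|^p$ whenever $|u|\leq d|x|$, and then concludes via \cite[Theorem 5.11]{khasminskii_stochastic_2012}. Your hypothesis checks and the fallback It\^o argument are sound, but the paper does not need them.
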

\begin{proof}
    We apply Proposition \ref{lem:exponential} and choose $d:=\min\left\{\tfrac{a_3}{4C},\sqrt{\tfrac{a_3}{4C}}\right\}$. Then, by \eqref{eq:def:expSLF_2} and \eqref{eq:def:expSLF_4},
        $\mc L_uV(x) \leq -a_3|x|^p + C(d+d^2)|x|^p \leq -0.5a_3|x|^p$
    for all $x\in\R^n\setminus\{0\}$ and $u\in\R^m$ with $|u|\leq d|x|$.
    Exponential $p$-stability of \eqref{eq:perturbed_SDE} for $\mf u$ as in Theorem \ref{thm:1.robustness} with $\delta(s)=ds$ then follows from \cite[Theorem 5.11]{khasminskii_stochastic_2012}. 
\end{proof}



\section{Input-to-state-stability}\label{s:weakISSp}
With the robustness results for perturbations vanishing at the origin established, we now consider input-to-state stability, which allows non-zero perturbations at origin.
The following two subsections correspond to the general setting of Section \ref{s:robustness_general} and the stronger exponential setting of Section \ref{s:exponential}, respectively.


\subsection{Weak input-to-state stability}\label{ss:weakISSp}
The goal of this section is to derive a stochastic version of \cite[Theorem 1]{sontag_further_1990} which states that, for deterministic systems, stability implies \emph{weak} input-to-state stability, with \emph{weak} meaning under a state-dependent positive input scaling.
We state, under Assumption \ref{assm:stability}, stochastic ISS (SISS) holds under an input transformation $u=\gamma(x)v$ with respect to the new input $v$, where $\gamma:\R^n\to\R_{>0}$ is a suitable scaling function.
Function $\gamma$ plays a similar role to $\delta$ in the previous section, as SISS with respect to $u$ may not hold due to nonlinearity in the input.
However, $\gamma$ can be positive at $0$, whereas $\delta$ in Theorem \ref{thm:1.robustness} usually has to be $0$ at $0$.


We start by defining input-to-state stability in probability, followed by a sufficient Lyapunov condition and eventually the main theorem of this section.

\begin{defn}\cite[Definition 3]{liu_notion_2008}\label{def:SISS}
    System \eqref{eq:SDE_system} is \emph{stochastically input-to-state stable (SISS)} if for any $\varepsilon>0$ there exist $\beta\in\mc{KL}$ and $\zeta\in\mc K$ such that
        $\P[|x_t|\leq\beta(|x_0|,t) + \zeta(\text{\normalfont ess\,sup}_{0\leq s\leq t}|\mf u(x_s,s)|)] \geq 1-\varepsilon,$
    where $\text{\normalfont ess\,sup}$ denotes the essential supremum,
    holds for all $t\geq0$ for the solution $x_t$ of \eqref{eq:perturbed_SDE}
    from any initial state $x_0$ and where $\mf u:\R_{\geq0}\times\R^n\to\R^m$ is any function that is measurable, essentially bounded, locally Lipschitz in $x$ uniformly in $t$, and for which \eqref{eq:perturbed_SDE} admits a unique solution from any initial state.
\end{defn}

\begin{defn}\label{defn:SISS-LF}
    A function $V\in\mc C^2_0(\R^n)$ is said to be a \emph{SISS-Lyapunov function (SISS-LF)} for system \eqref{eq:SDE_system} with input $u$ if there exist $\alpha_1,\alpha_2,\kappa\in\mc K_\infty$ and $\varrho\in\mc{PD}$ such that
    \begin{align}
        \alpha_1(|x|) \leq V(x) \leq \alpha_2(|x|)\quad&\forall x\in\R^n,\\
        \max_{|u|\leq\kappa(|x|)}\mc L_uV(x) \leq -\varrho(x)~\quad&\forall x\in\R^n\setminus\{0\}. \label{eq:defn:SISS_2}
    \end{align}
\end{defn}

Definition \ref{defn:SISS-LF} is inspired by \cite{teel_equivalent_2014} and resembles \eqref{eq:lem:main}, where the difference here is that $\kappa$ needs to be in $\mc K_\infty$.
Definition \ref{defn:SISS-LF} is less strict than the definition in \cite{liu_notion_2008} which requires $\mc L_uV(x) \leq \chi(|u|) - \alpha(|x|)$ for some $\chi,\alpha\in\mc K_\infty$.
Every SISS-LF in the sense of \cite{liu_notion_2008} satisfies Definition \ref{defn:SISS-LF} (choose $\varrho(x):=\alpha(|x|)/2$ and $\kappa(s) := \chi^{-1}(\alpha(s)/2)$), but not conversely as $\varrho$ is generally not radially unbounded.
Nevertheless, a SISS-LFs according to Definition \ref{defn:SISS-LF} is enough to guarantee the same SISS notion as in \cite{liu_notion_2008} (and our Definition \ref{def:SISS}) by following the proof of \cite[Theorem 1]{liu_notion_2008}.
We are now ready to state the main theorem of this section.
\begin{thm}\label{thm:main2}
    Suppose Assumption \ref{assm:stability} holds. Then there exists a continuous function $\gamma:\R^n\to\R_{>0}$ such that $V$ is a SISS-LF for
    \begin{align}
        \td x = f(x,\gamma(x)v) + g(x,\gamma(x)v)\td w.\label{eq:inflated_SDE}
    \end{align}
    with input $v$.
    In particular, \eqref{eq:inflated_SDE} is SISS with input $v$.
\end{thm}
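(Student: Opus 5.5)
The plan is to reduce Theorem \ref{thm:main2} to Proposition \ref{thm:main}. The idea is to choose a scaling $\gamma$ and a gain $\kappa\in\mc K_\infty$ such that every admissible scaled input $u=\gamma(x)v$ with $|v|\leq\kappa(|x|)$ satisfies $|u|\leq\delta(x)$, where $\delta$ is the bound produced by Proposition \ref{thm:main}. First, I apply Proposition \ref{thm:main} with $\rho:=\varrho/2$; then $\varrho-\rho=\varrho/2\in\mc{PD}$. This gives $\delta\in\mc{PD}$ with $\max_{|u|\leq\delta(x)}\mc L_uV(x)\leq-\varrho(x)/2$ for all $x\neq0$.

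Second, I construct $\kappa$. Define $m(s):=\min_{|x|=s}\delta(x)$ for $s>0$; it is positive by positive definiteness of $\delta$ and continuous by compactness of spheres. Let $\underline m(s):=\min_{s\leq r\leq 1}m(r)$ for $s\in(0,1]$. This is positive, continuous and nondecreasing. Since $\underline m$ is nondecreasing and positive, the choice $\kappa(s):=\tfrac{s}{1+s}\,\underline m(s)$ on $[0,1]$ is strictly increasing, satisfies $\kappa(0)=0$, and obeys $\kappa(s)\leq\underline m(s)\leq m(s)$. I then extend $\kappa$ to $[1,\infty)$ linearly, e.g.\ $\kappa(s):=\kappa(1)s$, to obtain $\kappa\in\mc K_\infty$. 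Then
\begin{align*}
    \gamma(x):=\min\left\{1,\tfrac{\delta(x)}{\kappa(|x|)}\right\},\quad x\neq0,\qquad \gamma(0):=1 .
\end{align*}
For $0<|x|\leq1$ we have $\delta(x)\geq m(|x|)\geq\kappa(|x|)$, so $\gamma\equiv1$ on the closed unit ball. Hence $\gamma$ is continuous at $0$. Away from $0$ it is a minimum of continuous functions, and it is positive everywhere because $\delta(x)>0$ for $x\neq0$.

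Third, I verify the SISS-LF inequality. For $x\neq0$ and $|v|\leq\kappa(|x|)$ we get $|\gamma(x)v|\leq\gamma(x)\kappa(|x|)\leq\delta(x)$. Therefore
\begin{align*}
    \max_{|v|\leq\kappa(|x|)}\mc L_{\gamma(x)v}V(x)\leq\max_{|u|\leq\delta(x)}\mc L_uV(x)\leq-\tfrac12\varrho(x).
\end{align*}
The left-hand side is exactly the operator of \eqref{eq:inflated_SDE} with input $v$. Together with the sandwich bounds \eqref{eq:assm:stability:1}, this shows that $V$ is a SISS-LF in the sense of Definition \ref{defn:SISS-LF}, with $\alpha_1,\alpha_2$ unchanged, the gain $\kappa$, and decrease rate $\varrho/2$. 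Local Lipschitzness of the scaled dynamics would need $\gamma$ locally Lipschitz. If required, $\gamma$ can be replaced by a smaller locally Lipschitz (even smooth) positive function, which preserves the inequality.

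Finally, SISS of \eqref{eq:inflated_SDE} follows from the remark preceding the theorem: a SISS-LF in the sense of Definition \ref{defn:SISS-LF} yields SISS by following the proof of \cite[Theorem 1]{liu_notion_2008}. The main obstacle I expect is the construction of $\kappa$ and $\gamma$. Three requirements pull against each other: $\delta$ vanishes at the origin, while $\gamma$ must stay positive there and $\kappa$ must be of class $\mc K_\infty$. These are reconciled by making $\kappa$ decay at least as fast as $\delta$ near the origin, which is what the construction of $\kappa$ via $\underline m$ achieves. A secondary obstacle is checking continuity of $\gamma$ at the origin, which is handled by $\gamma\equiv1$ on the unit ball.
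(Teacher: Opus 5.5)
Your proposal is correct and follows the paper's proof. You apply Proposition \ref{thm:main} with $\rho=\varrho/2$, factor the resulting bound as $\gamma(x)\kappa(|x|)\leq\delta(x)$ with $\kappa\in\mc K_\infty$ and continuous $\gamma>0$, and conclude via the SISS-LF argument of \cite{liu_notion_2008}. The only difference is that you build this factorization explicitly (via $m$, $\underline m$, and $\gamma\equiv1$ on the unit ball) instead of citing \cite[Lemma 18]{kellett_compendium_2014} as the paper does, and your construction is valid.
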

\begin{proof}
    Let $\delta\in\mc{PD}$ be as in Proposition \ref{thm:main} for $\rho:=\varrho/2$.
    Since $\delta\in\mc{PD}$, by \cite[Lemma 18]{kellett_compendium_2014} there exist continuous $\widetilde\gamma:\R_{\geq0}\to\R_{>0}$ and $\kappa\in\mc K_\infty$ such that $\widetilde\gamma(|x|)\kappa(|x|) \leq \delta(x)$ for all $x\in\R^n$.
    Define $\gamma:\R^n\to\R_{>0}, x\mapsto\widetilde\gamma(|x|)$, then $\gamma(x)\kappa(|x|)\leq\delta(x)$ for all $x\in\R^n$.
    In particular, we have for all $v\in\R^m$ with $|v|\leq\kappa(|x|)$, that $|\gamma(x)v|\leq\delta(x)$.
    With this, \eqref{eq:lem:main} implies
        $\max_{|v|\leq\kappa(|x|)}\mc L_{\gamma(x)v}V(x) \leq -\varrho(x)/2$
    for any $x\in\R^n\setminus\{0\}$.
    This, along with \eqref{eq:assm:stability:1}, verifies that $V$ is an SISS-LF for \eqref{eq:inflated_SDE} with control input $v$.
    It then follows from \cite[Theorem 1]{liu_notion_2008} that \eqref{eq:inflated_SDE} is SISS with respect to the input $v$.
\end{proof}

By replacing $f(x,u)$ with $f(x,k(x)+u)$ for some feedback control law $k:\R^n\to\R^m$, a result  similar to \cite[Theorem~1]{sontag_further_1990} on weak input-to-state stabilizability can be obtained.

\subsection{Exponential input-to-state stability}\label{s:exponential_ISS}
We now revisit the $p$-exponential framework of Section \ref{s:exponential} for $p\geq2$.
Then the state-dependent scaling $\gamma(x)$ of Section \ref{ss:weakISSp} is not required and the following form of exponential ISS holds when the nominal system is exponentially $p$-stable.

\begin{thm}
    Suppose Assumption \ref{assm:exp_stability} holds for $p\geq2$.
    Then there exists $\alpha>0$ and $A,B>0$ as well as $A_\varepsilon,B_\varepsilon>0$ for any $\varepsilon>0$, such that
    \begin{align}
        \E&\left[|x_t|^p\right] \leq A|x_0|^pe^{-\alpha t} + B{\overline u}^p,\label{eq:thm:expISS:1}\\
        \P&\left[|x_t| \leq A_\varepsilon|x_0|e^{-\alpha t/p} + B_\varepsilon\overline u\right] \geq 1-\varepsilon\label{eq:thm:expISS:2}
    \end{align}
    for any $t\geq0$ and any solution $x_t$ to \eqref{eq:perturbed_SDE} as in Definition \ref{def:SISS}, where $\overline u := \text{\normalfont ess\,sup}_{0\leq s\leq t}|\mf u(x_s,s)|$.
\end{thm}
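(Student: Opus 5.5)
The plan is to start from the Lyapunov function $V$ of Proposition \ref{lem:exponential} and turn \eqref{eq:def:expSLF_4} into a dissipation inequality of the form $\mc L_uV(x)\le -c_1V(x)+c_2|u|^p$. Combined with Dynkin's formula and a Gronwall argument, this gives \eqref{eq:thm:expISS:1}. The probabilistic bound \eqref{eq:thm:expISS:2} then follows from Markov's inequality.

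\textbf{Step 1 (dissipation inequality).} By \eqref{eq:def:expSLF_2} and \eqref{eq:def:expSLF_4},
\[
\mc L_uV(x)\le -a_3|x|^p + C|x|^{p-1}|u| + C|x|^{p-2}|u|^2 .
\]
Since $p\ge2$, Young's inequality with exponents $(\tfrac{p}{p-1},p)$ and, for $p>2$, $(\tfrac{p}{p-2},\tfrac p2)$ bounds the two cross terms by $\tfrac{a_3}{4}|x|^p+K_1|u|^p$ and $\tfrac{a_3}{4}|x|^p+K_2|u|^p$. For $p=2$ the last term is already $C|u|^2$. This is where $p\ge2$ is needed: for $p<2$ the factor $|x|^{p-2}$ blows up at the origin and the term cannot be absorbed. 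Using \eqref{eq:def:expSLF_1} we obtain
\[
\mc L_uV(x)\le -\tfrac{a_3}{2a_2}V(x)+K|u|^p \qquad \text{for all } x\ne0 .
\]
The inequality should also hold at $x=0$ in a suitable sense. Since $p\ge2$, \eqref{eq:def:expSLF_3} gives bounded first and second derivatives of $V$ near $0$, vanishing there when $p>2$. I expect $V$ can be treated as $\mc C^2$ (or $\mc C^{1,1}$) at the origin, or handled by approximation, for example by applying Itô's formula to $V$ smoothed on a small ball and letting its radius go to zero.

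\textbf{Step 2 (expectation bound).} Fix $t$ and set $\overline u$ as in the statement, so that $|\mf u(x_s,s)|\le\overline u$ a.s. for $s\le t$. I would apply Itô's formula to $e^{\alpha s}V(x_s)$ with $\alpha:=\tfrac{a_3}{2a_2}$, stopped at the exit time $\tau_R$ from $R\B$ so that the stochastic integral is a martingale. Taking expectations gives
\[
\E\!\left[e^{\alpha (t\wedge\tau_R)}V(x_{t\wedge\tau_R})\right]\le V(x_0)+\tfrac{K}{\alpha}e^{\alpha t}\overline u^p .
\]
Letting $R\to\infty$ by Fatou's lemma, and using that solutions are complete, yields $\E[V(x_t)]\le V(x_0)e^{-\alpha t}+\tfrac K\alpha\overline u^p$. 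The sandwich \eqref{eq:def:expSLF_1} then gives \eqref{eq:thm:expISS:1} with $A=a_2/a_1$ and $B=K/(\alpha a_1)$.

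\textbf{Step 3 (probability bound).} Markov's inequality gives
\[
\P\big[|x_t|^p>\varepsilon^{-1}(A|x_0|^pe^{-\alpha t}+B\overline u^p)\big]\le\varepsilon .
\]
Using $(a+b)^{1/p}\le a^{1/p}+b^{1/p}$, this yields \eqref{eq:thm:expISS:2} with $A_\varepsilon=(A/\varepsilon)^{1/p}$ and $B_\varepsilon=(B/\varepsilon)^{1/p}$.

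The main obstacle is the technical justification of Itô's formula given that $V$ may fail to be $\mc C^2$ at the origin. With $u\ne0$ the origin is no longer an equilibrium, so trajectories may visit it. This step is where I would invest the care, using the derivative bounds \eqref{eq:def:expSLF_3} with $p\ge2$ to control a mollification error near $0$. A secondary subtlety is that $\overline u$ is a random variable depending on $t$. I would instead use the deterministic bound $\sup_{s,x}|\mf u(s,x)|$, or work conditionally, which the essential boundedness in Definition \ref{def:SISS} permits.
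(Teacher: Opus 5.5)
Your proposal is correct and follows essentially the same route as the paper: Young's inequality turns \eqref{eq:def:expSLF_4} into $\mathcal{L}_uV(x)\le -\tfrac{a_3}{2a_2}V(x)+K|u|^p$, It\^o's formula with a Gr\"onwall-type argument gives \eqref{eq:thm:expISS:1} with the same constants $A=a_2/a_1$, $\alpha=a_3/(2a_2)$, $B=2Ka_2/(a_1a_3)$, and Markov's inequality then gives \eqref{eq:thm:expISS:2}. Your localization via $e^{\alpha s}V(x_s)$ and stopping times, your remark on applying It\^o's formula at the origin, and your point that $\overline u$ must be a deterministic bound on $|\mathfrak u|$ all address technical issues the paper's proof passes over silently; they refine the same argument rather than replace it.
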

\begin{proof}    
    Let $V,a_1,a_2,a_3,a_4$ and $C$ be as in Proposition~\ref{lem:exponential}.
    Using weighted Young's inequality, there exists a constant $K>0$ depending on $a_3,C$ and $p$ such that
    \begin{align}
        -a_3|x|^p + C|x|^{p-2}(|x||u|+|u|^2)\leq -\tfrac{a_3}2|x|^p + K|u|^p\label{eq:proof:expISS}
    \end{align}
    holds for all $(x,u)\in\R^n\times\R^m$.
    Now let $x_t,t\geq0$ be a solution to \eqref{eq:perturbed_SDE} as in Definition \ref{def:SISS} and $\overline u := \text{\normalfont ess\,sup}_{0\leq s\leq t}|\mf u(x_s,s)|$.
    Using \eqref{eq:proof:expISS} along with Itô's lemma, \eqref{eq:def:expSLF_4}, \eqref{eq:def:expSLF_2} and \eqref{eq:def:expSLF_1}, for almost every $t\geq0$,
        $\tfrac{\td}{\td t}\E[V(x_t)]
        = \E\left[\mc L_{\mf u(t, x_t)}V(x_t)\right]\nonumber
        \leq \E\left[-\tfrac{a_3}2|x_t|^p + K|\mf u(t, x_t)|^p\right]\nonumber
        \leq-\tfrac{a_3}{2a_2}\left(\E[V(x_t)]-\tfrac{2Ka_2}{a_3}\overline u^p\right).$
    Then, by Gr\"onwall's inequality,
        $\E[V(x_t)] - \tfrac{2Ka_2}{a_3}\overline u^p \leq \left(V(x_0) - \tfrac{2Ka_2}{a_3}\overline u^p\right) \exp\left(-\tfrac{a_3}{2a_2}t\right)\nonumber
        \leq V(x_0) \exp\left(-\tfrac{a_3}{2a_2}t\right)$
    for any $t\geq0$.
    Using \eqref{eq:def:expSLF_1}, we then get for any $t\geq0$ that
    \begin{align}
        \E\left[|x_t|^p\right] \leq \underbrace{\tfrac{a_2}{a_1}}_{=:A}|x_0|^p\exp\Big(-\underbrace{\tfrac{a_3}{2a_2}}_{=:\alpha}t\Big) + \underbrace{\tfrac{2Ka_2}{a_1a_3}}_{=:B}\overline u^p,
    \end{align}
    which shows \eqref{eq:thm:expISS:1}.
    Inequality \eqref{eq:thm:expISS:2} follows for any $\varepsilon>0$ with $A_\varepsilon:=\left(\tfrac A{\varepsilon}\right)^{1/p}, B_\varepsilon:=\left(\tfrac B{\varepsilon}\right)^{1/p}$ from \eqref{eq:thm:expISS:1}, Markov's inequality and $a^p+b^p\leq(a+b)^p$ for any $a,b\geq0,p\geq1$.
\end{proof}

\section{Application to stochastic backstepping}\label{s:backstepping}
\subsection{Class of systems}
Consider now the cascaded system
\begin{subequations}
\begin{align}
    \td x &= f(x,u)\td t + g(x,u)\td w, \label{eq:SBS:1}\\
    \dot u &= v,\label{eq:SBS:2}
\end{align}
\end{subequations}
with $f$ and $g$ as in Section \ref{s:robustness} and scalar $u$ and $v$.
Our goal is to construct a control law $v=v(x,u)$ such that the cascade is GASp.
We make the next assumption 
on system \eqref{eq:SBS:1} that there exists a static state-feedback law such that the corresponding closed-loop system satisfies Assumption \ref{assm:stability} with the additional conditions of Remark \ref{rem:quadratic} and further regularity.
\begin{assm}\label{assm:backstepping}
    There exists a twice continuously differentiable feedback control law $k:\R^n\to\R$ such that the conditions of Assumption \ref{assm:stability} hold for system \eqref{eq:SBS:1} under $u=k(x)$,
    with $V\in\mc C^2(\R^n)$ and $\liminf_{x\to0}\varrho(x)/|x|^2>0$.
    Furthermore, $f$ and $g$ have locally Lipschitz derivatives, and $V$ and $k$ have locally Lipschitz Hessians.
\end{assm}

\begin{remark}
    Our procedure works similarly for the more general dynamics $du = (v + f_2(x,u))\td t + g_2(x,u)\td w$ with appropriate $f_2, g_2$, but we consider \eqref{eq:SBS:2} for simplicity.
\end{remark}

\subsection{Control design in the general case}\label{ss:backstepping:design}
We define $z:=u-k(x)$ and 
\begin{subequations}
\begin{align}
    f_k(x,z)&:=f(x,k(x)+z),\label{eq:defn:fk}\\
    g_k(x,z)&:=g(x,k(x)+z).\label{eq:defn:gk}
\end{align}
\end{subequations}
In transformed coordinates $(x,z)$ the dynamics \eqref{eq:SBS:1} and \eqref{eq:SBS:2} become, by It\^o's lemma,
\begin{subequations}
\begin{align}
    \td x &= f_k(x,z)\td t + g_k(x,z)\td w, \label{eq:SBS:1n}\\
    dz &= (v - \mc L_{k(x)+z}k(x))\td t - \tfrac{\del k(x)}{\del x}g_k(x,z)\td w,\label{eq:SBS:2n}
\end{align}
\end{subequations}
which we write as
\begin{subequations}
\begin{align}
    \td\widetilde x = \widetilde f(&\widetilde x)\td t + \widetilde g(\widetilde x)\td w, \label{eq:backstepping:system_tilde}\\
    \widetilde x := \begin{bmatrix}x\\z \end{bmatrix},\quad \widetilde f(\widetilde x) &:= \begin{bmatrix}
        f_k(x,z)\\v - \mc L_{k(x)+z}k(x)
    \end{bmatrix},\\
    \widetilde g(\widetilde x) &:= \begin{bmatrix}
        g_k(x,z)\\- \tfrac{\del k(x)}{\del x}g_k(x,z)
    \end{bmatrix}.
\end{align}
\end{subequations}
Our target SLF is
\begin{align}
    \widetilde V(\widetilde x):=V(x)+\tfrac12\gamma(x) z^2,
\end{align}
with a $\mc C^\infty$ function $\gamma:\R^n\to\R_{>0}$ yet to be defined.
We aim to construct $v=v(x,u)$ such that
\begin{align}
    \tmc L\widetilde V(\widetilde x) \leq -\varrho(x)/2 - cz^2\quad\forall (x,z)\in\R^n\times\R\label{eq:backstepping:Lapunov_decrease}
\end{align}
for some $c>0$, where $\tmc L$ denotes the differential operator of the system \eqref{eq:backstepping:system_tilde}.
By definition of $\widetilde V$, the first derivative and Hessian of $\wt V$ are
\begin{align}
    \tfrac{\del \widetilde V(\wt x)}{\del\wt x} &= \begin{bmatrix}\tfrac{\del V(x)}{\del x} + \tfrac12\tfrac{\del\gamma(x)}{\del x}z^2 & \gamma(x)z\end{bmatrix},\\
    \tfrac{\del^2 \wt V(\wt x)}{\del\wt x^2} &= \begin{bmatrix}
        \tfrac{\del^2 V(x)}{\del x^2} + \tfrac12\tfrac{\del^2\gamma(x)}{\del x^2}z^2 & \tfrac{\del\gamma(x)}{\del x}^\top z \\ \tfrac{\del\gamma(x)}{\del x} z & \gamma(x)
    \end{bmatrix}.
\end{align}
Similar to Section \ref{s:stronger}, we write
\begin{align}
    f_k(x,z) &= f_k(x,0) + f_z(x,z)z,\\
    g_k(x,z) &= g_k(x,0) + g_z(x,z) z\nonumber\\
    &= g_k(x,0) + \begin{bmatrix}g_z^1(x,z)z&\dots&g_z^p(x,z)z\end{bmatrix},\\
    g_k(x,0) &= \begin{bmatrix}g_x^1(x)x&\dots&g_x^p(x)x\end{bmatrix}\label{eq:backstepping_gx}
\end{align}
for suitable $f_z,g_z,g_z^i$ and $g_x^i$.
Furthermore, let $V_x$ be as in Remark \ref{rem:quadratic}.
Then, as in \eqref{eq:LuV_stronger},
\begin{align}
    \mc L_{k(x)+z}V(x)\!-\!\mc L_{k(x)}V(x)&=x^\top\eta_1(x,z)z\!+\!\eta_2(x,z)z^2,
\end{align}
with $\eta_1,\eta_2:\R^n\times\R\to\R^n$ defined as
\begin{align}
    \eta_1(x,z) &:= V_x(x)f_z(x,z) + \sum_{i=1}^p g_x^i(x)^\top\tfrac{\del^2 V(x)}{\del x^2}g_z^i(x,z),\nonumber\\[-5pt]
    \eta_2(x,z) &:= \sum_{i=1}^p g_z^i(x,z)^\top\tfrac{\del^2 V(x)}{\del x^2}g_z^i(x,z).
\end{align}
Then, for all $\wt x\in\R^n\times\R$,
\begin{align*}
    &\tmc L\wt V(\wt x)\\[-0.5pt]
    &= \tfrac{\del \widetilde V(\wt x)}{\del\wt x}\wt f(\wt x) + \tfrac12\text{Tr}\left(\wt g(\wt x)^\top \tfrac{\del^2 \wt V(\wt x)}{\del\wt x^2} \wt g(\wt x)\right)\\[-0.5pt]
    &= \tfrac{\del V(x)}{\del x}f_k(x,z) + \tfrac12\tfrac{\del\gamma(x)}{\del x}z^2 f_k(x,z)\\[-0.5pt]  
    &~~~+ \gamma(x)z(v-\mc L_{k(x)+z}k(x))\\[-0.5pt]
    &~~~+\tfrac12 \text{Tr}\left(g_k(x,z)^\top\left(\tfrac{\del^2 V(x)}{\del x^2}+\tfrac12\tfrac{\del^2\gamma(x)}{\del x^2}z^2\right)g_k(x,z)\right)\\[-0.5pt]
    &~~~-\text{Tr}\left(g_k(x,z)^\top \tfrac{\del\gamma(x)}{\del x}^\top z\tfrac{\del k(x)}{\del x}g_k(x,z)\right)\\[-0.5pt]
    &~~~+\tfrac12\text{Tr}\left(g_k(x,z)^\top \tfrac{\del k(x)}{\del x}^\top\gamma(x)\tfrac{\del k(x)}{\del x}g_k(x,z)\right)\\[-0.5pt]
    &= \mc L_{k(x)+z}V(x)\\[-0.5pt]
    &~~~+ z\Big[\gamma(x)(v-\mc L_{k(x)+z}k(x)) + \tfrac12\tfrac{\del\gamma(x)}{\del x}z f_k(x,z)\\[-0.5pt]
    &~~~~~~~~~~~~+\tfrac14 \text{Tr}\left(g_k(x,z)^\top\tfrac{\del^2\gamma(x)}{\del x^2}z g_k(x,z)\right)\\[-0.5pt]
    &~~~~~~~~~~~~-\text{Tr}\left(g_k(x,z)^\top \tfrac{\del\gamma(x)}{\del x}^\top\tfrac{\del k(x)}{\del x}g_k(x,z)\right)\Big]\\[-0.5pt] 
    &~~~+\tfrac12\text{Tr}\Big(\left(g_k(x,0)+g_z(x,z)z\right)^\top \tfrac{\del k(x)}{\del x}^\top\gamma(x)\\[-0.5pt]
    &~~~~~~~~~~~~~~~~~~~~~~~~~~~~~~~\tfrac{\del k(x)}{\del x}(g_k(x,0)+g_z(x,z)z)\Big)\\[-0.5pt]
    &= \mc L_{k(x)}V(x) + z[x^\top\eta_1(x,z) + \eta_2(x,z)z]\\[-0.5pt]
    &~~~+ z\Big[\gamma(x)(v-\mc L_{k(x)+z}k(x)) + \tfrac12\tfrac{\del\gamma(x)}{\del x}z f_k(x,z)\\[-0.5pt]
    &~~~~~~~~~~~~+\tfrac14 \text{Tr}\left(g_k(x,z)^\top\tfrac{\del^2\gamma(x)}{\del x^2}z g_k(x,z)\right)\\[-0.5pt]
    &~~~~~~~~~~~~-\text{Tr}\left(g_k(x,z)^\top \tfrac{\del\gamma(x)}{\del x}^\top\tfrac{\del k(x)}{\del x}g_k(x,z)\right)\\[-0.5pt]
    &~~~~~~~~~~~~+\text{Tr}\left(g_k(x,0)^\top \tfrac{\del k(x)}{\del x}^\top\gamma(x)\tfrac{\del k(x)}{\del x}g_z(x,z)\right)\\[-0.5pt]
    &~~~~~~~~~~~~+\tfrac12\text{Tr}\left(g_z(x,z)^\top \tfrac{\del k(x)}{\del x}^\top\gamma(x)\tfrac{\del k(x)}{\del x}g_z(x,z)z\right)\Big]\\[-0.5pt] 
    &~~~+\tfrac12\text{Tr}\left(g_k(x,0)^\top \tfrac{\del k(x)}{\del x}^\top\gamma(x)\tfrac{\del k(x)}{\del x}g_k(x,0)\right). \numberthis\label{eq:backstepping:calculation}
\end{align*}
All terms in square brackets can be compensated by suitable choice of $v$.
The final term, which only depends on $x$, can be compensated by $\varrho(x)/2$ and suitable choice of $\gamma$ as follows.
By \eqref{eq:backstepping_gx} and the definition of the spectral norm $|\cdot|$,
\begin{align}
    &\tfrac12\text{Tr}\left(g_k(x,0)^\top \tfrac{\del k(x)}{\del x}^\top\gamma(x)\tfrac{\del k(x)}{\del x}g_k(x,0)\right)\nonumber\\
    &= \tfrac12\sum_{i=1}^p x^\top g_x^i(x)^\top \tfrac{\del k(x)}{\del x}^\top \gamma(x)\tfrac{\del k(x)}{\del x}g_x^i(x)x\nonumber\\
    &\leq \tfrac12\gamma(x)\underbrace{\n{\sum_{i=1}^p g_x^i(x)^\top \tfrac{\del k(x)}{\del x}^\top\tfrac{\del k(x)}{\del x}g_x^i(x)}}_{=:\mu(x)}|x|^2\label{eq:defn:mu}
\end{align}
for every $x\in\R^n$.
The function $\mu:\R^n\to\R_{\geq0}$ is continuous since $g_k^i$ and $\tfrac{\del k}{\del x}$ are continuous.
We define $\overline\gamma:\R^n\to\R_{>0}$ as
\begin{align}    \overline\gamma(0)&:=\begin{cases}\liminf_{x\to0}\tfrac{\varrho(x)}{\mu(x)|x|^2}, & \mu(0)>0,\\
    1, & \mu(0)=0,\end{cases}\label{eq:def:gamma0}\\
    \overline\gamma(x) &:= \min\left\{\overline\gamma(0),\tfrac{\varrho(x)}{\mu(x)|x|^2}\right\}\quad\forall x\in\R^n\setminus\{0\}.\label{eq:def:gammax}
\end{align}
Since $\liminf_{x\to0}\varrho(x)/|x|^2>0$ by Assumption \ref{assm:backstepping} and since $\mu$ is continuous at $0$, we have $\overline\gamma(0)>0$.
Furthermore, because $\varrho\in\mc{PD}$, we have $\overline\gamma(x)>0$ for all $x\in\R^n$.
The function $\overline\gamma$ is continuous at every $x\in\R^n\setminus\{0\}$ with $\mu(x)>0$ because $\varrho$ and $\mu$ are continuous.
The minimization with $\overline\gamma(0)$ ensures continuity and well-definedness of $\overline\gamma$ even when $x=0$ or $\mu(x)=0$.
In conclusion, we have shown that $\overline\gamma$ is positive-valued and continuous.
Standard smoothing arguments, e.g., \cite[Theorem 2.2]{hirsch_differential_1976}, can be used to obtain a function $\gamma:\R^n\to\R_{>0}$ that is $\mc C^\infty$ and satisfies $\gamma(x)\leq\overline\gamma(x)$ for all $x\in\R^n$, which also implies that $\frac1\gamma$ is locally Lipschitz.
Then, for any $c>0$, the control law
\begin{align*}
    &v(x,z) := \mc L_{k(x)+z}k(x) - 
    \tfrac1{\gamma(x)}\Big[x^\top\eta_1(x,z) + \eta_2(x,z)z\\
    &~~~~+\tfrac z2\tfrac{\del\gamma(x)}{\del x} f_k(x,z)+\tfrac z4 \text{Tr}\left(g_k(x,z)^\top\tfrac{\del^2\gamma(x)}{\del x^2} g_k(x,z)\right)\\
    &~~~~~~~~~~~\,~~~+cz-\text{Tr}\left(g_k(x,z)^\top \tfrac{\del\gamma(x)}{\del x}^\top\tfrac{\del k(x)}{\del x}g_k(x,z)\right)\\
    &~~~~~~~~~~\,~~~~~+\text{Tr}\left(g_k(x,0)^\top \tfrac{\del k(x)}{\del x}^\top\gamma(x)\tfrac{\del k(x)}{\del x}g_z(x,z)\right)\\
    &~~~+\tfrac12\text{Tr}\left(g_z(x,z)^\top \tfrac{\del k(x)}{\del x}^\top\gamma(x)\tfrac{\del k(x)}{\del x}g_z(x,z)z\right)\Big]\numberthis \label{eq:backstepping:control_law}
\end{align*}
achieves the desired Lyapunov decrease \eqref{eq:backstepping:Lapunov_decrease} because of \eqref{eq:backstepping:calculation}, and because $\gamma(x)\mu(x)|x|^2 \leq \overline\gamma(x)\mu(x)|x|^2 \leq \varrho(x)$ and $\mc L_{k(x)}V(x)\leq-\varrho(x)$ for all $x\in\R^n$.
Furthermore, $v(0,0)=0$, so that the origin is an equilibrium for \eqref{eq:backstepping:system_tilde}.
Finally, $v(x,z)$ is locally Lipschitz because all the functions making up $v(x,z)$ are locally Lipschitz by Assumption \ref{assm:backstepping} and construction of $\gamma$.
We have that the function $\wt V$ is positive definite and radially unbounded, and can therefore be bounded below and above by $\mc K_\infty$ functions as in \eqref{eq:assm:stability:1}.
We have shown that $\widetilde V$ is an SLF for the closed-loop system \eqref{eq:backstepping:system_tilde}.
The Lyapunov theorem \cite[Theorem 5.8]{khasminskii_stochastic_2012} then implies GASp, as summarized in the following theorem.

\begin{thm}
    Suppose Assumptions \ref{assm:backstepping} holds.
    Then the origin of system \eqref{eq:backstepping:system_tilde} under control law \eqref{eq:backstepping:control_law} is GASp.
\end{thm}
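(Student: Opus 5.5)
The plan is to verify the hypotheses of the Lyapunov theorem \cite[Theorem 5.8]{khasminskii_stochastic_2012} for the closed-loop system \eqref{eq:backstepping:system_tilde} with $v=v(x,z)$ from \eqref{eq:backstepping:control_law}, using $\widetilde V(\widetilde x)=V(x)+\tfrac12\gamma(x)z^2$ as the SLF. The argument has three ingredients: regularity of the closed loop, sandwich bounds on $\widetilde V$, and the decrease \eqref{eq:backstepping:Lapunov_decrease}.

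\textbf{Regularity.} First I check that the closed-loop drift and diffusion are locally Lipschitz and vanish at the origin, so that strong solutions are unique.
\begin{itemize}
\item $\widetilde g$ involves only $g_k$ and $\tfrac{\del k}{\del x}$. Both are locally Lipschitz because $g$ is and $k\in\mc C^2$.
\item $\mc L_{k(x)+z}k(x)$ involves $f$, $g$, and the first and second derivatives of $k$. These are locally Lipschitz by Assumption \ref{assm:backstepping}, since $k$ has a locally Lipschitz Hessian.
\item The factors $f_z,g_z^i,g_x^i,V_x$ are integrals of derivatives that are locally Lipschitz by Assumption \ref{assm:backstepping}, so they are locally Lipschitz.
\item $\gamma$ is $\mc C^\infty$ and positive, so $1/\gamma$ is locally Lipschitz.
\item $\widetilde f(0)=0$ and $\widetilde g(0)=0$, because $f_k(0,0)=f(0,k(0))$ and $k(0)=0$ may be assumed (it is needed for $u=k(x)$ to keep $0$ as the equilibrium), and because $v(0,0)=0$ by inspection of \eqref{eq:backstepping:control_law}.
\end{itemize}
Global existence will not be assumed separately: the Khasminskii theorem (or a stopping-time argument with $\widetilde V$ radially unbounded and $\tmc L\widetilde V\le0$) rules out finite escape.

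\textbf{Sandwich bounds on $\widetilde V$.} I need $\widetilde\alpha_1(|\widetilde x|)\le\widetilde V(\widetilde x)\le\widetilde\alpha_2(|\widetilde x|)$ with $\widetilde\alpha_1,\widetilde\alpha_2\in\mc K_\infty$. The upper bound is straightforward: take $\widetilde\alpha_2(s):=\alpha_2(s)+\tfrac12\max_{|x|\le s}\gamma(x)s^2$, made strictly increasing if necessary.

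The lower bound is the delicate point, since $\gamma$ may decay as $|x|\to\infty$ and then $\tfrac12\gamma(x)z^2$ alone is not radially unbounded in $z$ uniformly in $x$. I would argue as follows. $\widetilde V$ is continuous, positive definite and radially unbounded: if $|\widetilde x|\to\infty$, then either $|x|\to\infty$, so $V\to\infty$, or $x$ stays bounded while $|z|\to\infty$, and then $\gamma(x)\ge\min_{|x|\le R}\gamma>0$. Hence $\widetilde\alpha_1(s):=\inf_{|\widetilde x|\ge s}\widetilde V(\widetilde x)$ is positive for $s>0$ and unbounded. It is a standard fact that it can then be minorized by a $\mc K_\infty$ function.

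\textbf{Decrease.} Substituting $v$ into \eqref{eq:backstepping:calculation}, the bracket multiplying $z$ collapses to $-cz$. The remaining terms are $\mc L_{k(x)}V(x)$ and the last trace term. For the trace term, \eqref{eq:defn:mu} together with $\gamma\le\overline\gamma$ and \eqref{eq:def:gammax} gives
\[
\tfrac12\gamma(x)\mu(x)|x|^2\le\tfrac12\varrho(x).
\]
This requires checking three cases: $x\neq0$ with $\mu(x)>0$ is immediate from \eqref{eq:def:gammax}; $\mu(x)=0$ makes the term zero; and $x=0$ makes it zero as well. Combined with $\mc L_{k(x)}V(x)\le-\varrho(x)$, this yields
\[
\tmc L\widetilde V\le-\varrho(x)/2-cz^2=:-\widetilde\varrho(\widetilde x),
\]
and $\widetilde\varrho\in\mc{PD}$ on $\R^{n+1}$. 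Since $V\in\mc C^2(\R^n)$, $\widetilde V$ is even $\mc C^2$ at the origin.

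\textbf{Conclusion and main obstacle.} \cite[Theorem 5.8]{khasminskii_stochastic_2012} then yields GASp. The main obstacle is the careful verification that the algebra in \eqref{eq:backstepping:calculation} is correct, i.e.\ that every $z$-dependent term is exactly cancelled. Particular care is needed with the cross terms between $g_k(x,0)$ and $g_z z$ in the $\gamma(\tfrac{\del k}{\del x})^\top\tfrac{\del k}{\del x}$ quadratic form, and with the factor $\tfrac14$ on the Hessian of $\gamma$. The uniform $\mc K_\infty$ lower bound for $\widetilde V$ is secondary but should be stated explicitly.
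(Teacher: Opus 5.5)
Your proposal is essentially the paper's argument: the same SLF $\widetilde V=V+\tfrac12\gamma z^2$, the same cancellation of the $z$-bracket in \eqref{eq:backstepping:calculation} by \eqref{eq:backstepping:control_law}, the same bound $\tfrac12\gamma\mu|x|^2\le\tfrac12\varrho$ from $\gamma\le\overline\gamma$, and the same appeal to \cite[Theorem 5.8]{khasminskii_stochastic_2012}, with your $\mc K_\infty$ sandwich and non-explosion remarks spelling out what the paper states in one line. Two small corrections: at the origin you need $f(0,k(0))=0$ and $g(0,k(0))=0$ (implicit in Assumption \ref{assm:backstepping}), not $k(0)=0$; and the algebra check you flagged does find an error, since the It\^o correction actually gives $\mc L_{k(x)+z}V(x)-\mc L_{k(x)}V(x)=x^\top\eta_1z+\tfrac12\eta_2z^2$ (the factor $\tfrac12$ is dropped in passing to \eqref{eq:LuV_stronger}), so $\eta_2$ must be halved in \eqref{eq:backstepping:control_law}, because otherwise a residual $-\tfrac12\eta_2z^2$ remains that can be positive wherever the Hessian of $V$ is indefinite.
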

\subsection{Exponential case}
If the subsystem \eqref{eq:SBS:1} is exponentially stabilizable, the procedure can be simplified similarly to Sections \ref{s:exponential} and \ref{s:exponential_ISS}.
For that we make the following assumption.
\begin{assm}\label{assm:backstepping:exponential}
    There exists a twice continuously differentiable feedback control law $k:\R^n\to\R$ such that the origin is exponentially 2-stable for \eqref{eq:SBS:1} under $u=k(x)$, and $f,g$ and $k$ have bounded first and second derivatives.
\end{assm}
Under Assumption \ref{assm:backstepping:exponential}, we can apply Proposition \ref{lem:exponential} for $f_k$ and $g_k$ to obtain $V\in\mc C^2(\R^n)$ satisfying \eqref{eq:def:expSLF_1} and \eqref{eq:def:expSLF_3} with $p=2$, and $\mc L_{k(x)}V(x)\leq -a_3|x|^2=:\varrho(x)$ for all $x\in\R^n$.
The procedure in Section \ref{ss:backstepping:design} can then be simplified because $\mu$ defined in \eqref{eq:defn:mu} is bounded, hence $\gamma:=a_3/\sup_{x\in\R^n}\mu(x)>0$ can be chosen constant.
The control law \eqref{eq:backstepping:control_law} then simplifies~to
\begin{align*}
    &v(x,z) := \mc L_{k(x)+z}k(x) - 
    \tfrac1{\gamma}\left[x^\top\eta_1(x,z) + \eta_2(x,z)z+cz\right]\\
    &+\text{Tr}\left(g_z(x,z)^\top \tfrac{\del k(x)}{\del x}^\top\tfrac{\del k(x)}{\del x}\left(g_k(x,0)+\tfrac12g_z(x,z)z\right)\right)\label{eq:backstepping:control_law:exponential}\numberthis
\end{align*}
and achieves exponential stability as follows.
\begin{thm}
    Suppose Assumption \ref{assm:backstepping:exponential} holds. Then the origin of system \eqref{eq:backstepping:system_tilde} under control law \eqref{eq:backstepping:control_law:exponential} is exponentially 2-stable.
\end{thm}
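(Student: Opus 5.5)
The plan is to show that $\widetilde V(\wt x)=V(x)+\tfrac12\gamma z^2$, with constant $\gamma=a_3/\sup_x\mu(x)$, satisfies the hypotheses of the exponential Lyapunov theorem \cite[Theorem 5.11]{khasminskii_stochastic_2012} with $p=2$. Concretely, I aim for
\[
\tilde a_1|\wt x|^2\le\wt V(\wt x)\le\tilde a_2|\wt x|^2
\quad\text{and}\quad
\tmc L\wt V(\wt x)\le-\tilde a_3|\wt x|^2 .
\]

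\textbf{Preliminaries.} First apply Proposition \ref{lem:exponential} to the closed loop $f_k(x,0)$, $g_k(x,0)$. Assumption \ref{assm:backstepping:exponential} gives bounded first and second derivatives of $f_k$ and $g_k$, since $k$ has bounded derivatives. This yields $V$ with \eqref{eq:def:expSLF_1}--\eqref{eq:def:expSLF_3} for $p=2$. The sandwich bounds on $\wt V$ then follow immediately with $\tilde a_1=\min\{a_1,\gamma/2\}$ and $\tilde a_2=\max\{a_2,\gamma/2\}$. If $\sup\mu=0$, I take any $\gamma>0$ instead.

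\textbf{Closed-loop computation.} Next, specialize the calculation \eqref{eq:backstepping:calculation} to constant $\gamma$, so all terms with $\tfrac{\del\gamma}{\del x}$ and $\tfrac{\del^2\gamma}{\del x^2}$ vanish. I then check that \eqref{eq:backstepping:control_law:exponential} coincides with \eqref{eq:backstepping:control_law} in this case: the $\gamma$ inside the last two trace terms cancels the prefactor $\tfrac1\gamma$, and the two terms combine as $\text{Tr}(g_z^\top k_x^\top k_x(g_k(x,0)+\tfrac12 g_z z))$ up to sign. Substituting gives
\[
\tmc L\wt V\le \mc L_{k(x)}V(x)-cz^2+\tfrac12\gamma\mu(x)|x|^2\le -a_3|x|^2+\tfrac{a_3}{2}|x|^2-cz^2 .
\]
Here the middle term is bounded using $\gamma\mu(x)\le a_3$. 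I would actually choose $\gamma=a_3/(2\sup\mu)$, or use the $\varrho/2$ split as in the general case, to retain strict decrease. This leaves
\[
\tmc L\wt V\le-\tfrac{a_3}{2}|x|^2-cz^2\le-\min\{a_3/2,c\}|\wt x|^2 .
\]

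\textbf{Well-posedness and main obstacle.} Finally, I verify the regularity needed for \cite[Theorem 5.11]{khasminskii_stochastic_2012}, which I expect to be the main obstacle: existence and uniqueness of solutions to the closed loop, so that $\wt f$ and $\wt g$ are locally Lipschitz with $v(0,0)=0$. Locally Lipschitz requires $\eta_1,\eta_2,f_z,g_z,g_x^i$ and $\mc L_{k(x)+z}k(x)$ to be locally Lipschitz. These are built from first derivatives of $f,g$ (via $\int_0^1$ representations), $V_x$ and the Hessian of $V$, and the Hessian of $k$. The Hessian of $k$ is bounded but possibly only continuous, so if needed I would add local Lipschitzness of second derivatives as in Assumption \ref{assm:backstepping}, or appeal to the converse construction in \cite[Theorem 5.12]{khasminskii_stochastic_2012} giving smooth $V$. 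Since Theorem 5.11 only needs $\wt V$ and the decrease bound on $\R^{n+1}\setminus\{0\}$, no $\mc C^2$ issue at the origin arises. If global existence needs linear growth, I would note that the quadratic-in-$z$ terms only enter through the drift, and that the Lyapunov bound itself excludes finite escape. This yields exponential 2-stability.
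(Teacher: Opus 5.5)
Your overall route is the paper's: take $V$ from Proposition \ref{lem:exponential} applied to $f_k,g_k$ with $p=2$, use the constant $\gamma=a_3/\sup_x\mu(x)$, specialize \eqref{eq:backstepping:calculation} to get $\tmc L\wt V\le-\tfrac{a_3}{2}|x|^2-cz^2$, and conclude via the quadratic sandwich for $\wt V$ and \cite[Theorem 5.11]{khasminskii_stochastic_2012}. Halving $\gamma$ is unnecessary: the factor $\tfrac12$ in $\tfrac12\gamma\mu(x)|x|^2$ already leaves $-\tfrac{a_3}{2}|x|^2$, as your own display shows. Two steps need repair.

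First, you cannot leave ``up to sign'' open, because the sign is what makes the cancellation work. Taking $\gamma$ constant in \eqref{eq:backstepping:control_law} and using $\text{Tr}(A^\top B)=\text{Tr}(B^\top A)$, the last two trace terms combine to $-\text{Tr}(g_z^\top\tfrac{\del k}{\del x}^\top\tfrac{\del k}{\del x}(g_k(x,0)+\tfrac12 g_z z))$. By contrast, \eqref{eq:backstepping:control_law:exponential} carries a $+$. With the printed sign, $\gamma z(v-\mc L_{k(x)+z}k(x))$ doubles the term $\gamma z\,\text{Tr}(\cdots)$ from \eqref{eq:backstepping:calculation} instead of cancelling it. Concretely, on $\{x=0\}$ the corrected law gives exactly $\tmc L\wt V=-cz^2$, while the printed one adds $\gamma z^2|\tfrac{\del k}{\del x}(0)g_z(0,z)|^2$. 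This destroys the decrease for small $c$ whenever $\tfrac{\del k}{\del x}(0)g_z(0,z)\neq0$. So commit to the minus sign: prove the claim for \eqref{eq:backstepping:control_law} with constant $\gamma$, and treat the printed $+$ as a slip. The same care applies to $\eta_2$. The identity $\mc L_{k(x)+z}V-\mc L_{k(x)}V=x^\top\eta_1z+\eta_2z^2$ requires a factor $\tfrac12$ in front of the sum defining $\eta_2$, so work from the identity rather than from the printed formula.

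Second, your claim that ``no $\mc C^2$ issue at the origin arises'' is incorrect. Proposition \ref{lem:exponential} only provides $V\in\mc C_0^2(\R^n)$. With only that, $\wt V(x,z)=V(x)+\tfrac{\gamma}{2}z^2$ fails to be $\mc C^2$ along the whole line $\{0\}\times\R$, not merely at the origin of $\R^{n+1}$. Yet \cite[Theorem 5.11]{khasminskii_stochastic_2012} needs $\tmc L\wt V$ at the points $(0,z)$ with $z\neq0$, and the control law itself evaluates $\tfrac{\del^2V}{\del x^2}(0)$ through $\eta_2(0,z)$. The argument therefore needs $V\in\mc C^2(\R^n)$, which the paper asserts for $p=2$; the converse Lyapunov function is then built from the smooth map $|\cdot|^2$. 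Your well-posedness concern, by contrast, is legitimate, and the paper passes over it. Under Assumption \ref{assm:backstepping:exponential} the Hessians of $k$ and $V$ are only continuous, so $v$ need not be locally Lipschitz. Adding locally Lipschitz Hessians as in Assumption \ref{assm:backstepping} settles this. Appealing to the converse construction does not: under the stated regularity it guarantees no more than a continuous Hessian of $V$, and it does nothing for $\tfrac{\del^2k}{\del x^2}$.
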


\section{Proofs}\label{s:proofs}
\subsection{Proof of Proposition \ref{thm:main}}\label{s:proof:prop:main}
    Let $\rho\in\mc{PD}$ such that $\varrho-\rho\in\mc{PD}$.
    Let $R>0$, and define the ring-shaped set 
        $\mc R_R:=\{x\in\R^n~|~\min\{1,R\}\leq|x|\leq\max\{1,R\}\},$
    which is compact.
    The function $(x,u)\mapsto \mc L_{u}V(x)$ is continuous on $(\R^n\setminus\{0\})\times\R^m$ because $V\in\mc C^2_0(\R^n)$ and $f$ and $g$ are continuous, and hence uniformly continuous on the compact set $\mc R_R\times\overline\B\subseteq(\R^n\setminus\{0\})\times\R^m$.
    Hence, for any $\varepsilon>0$ there exists $\delta>0$ such that for any $x,x'\in\mc R_R$ and $u,u'\in\overline\B$ with $|x-x'|<\delta$ and $|u-u'|<\delta$, it holds that
        $|\mc L_{u}V(x) - \mc L_{u'}V(x')|<\varepsilon.$
    We are interested in the case where $\varepsilon:=\min_{x\in\mc R_R}(\varrho(x)-\rho(x))$ (which exists and is positive since $\varrho-\rho\in\mc{PD}$ and $\mc R_R$ is compact and excludes $0$) and where $x=x'$ and $u'=0$.
    Then we have that there exists $\delta_R\in(0,1]$ such that for all $x\in\mc R_R$ and $u\in\delta_R\overline\B$,
    \begin{align}
        \mc L_{u}V(x) - \mc L_0V(x)\leq\varrho(x)-\rho(x).\label{eq:proof:thm_main:eps_delta}
    \end{align}
    Let $\delta_R$ be the largest $\delta_R\in(0,1]$ that satisfies
    \eqref{eq:proof:thm_main:eps_delta} for all $x\in\mc R_R$ and $u\in\delta_R\overline\B$; this maximum exists due to continuity and compactness.
    From \eqref{eq:proof:thm_main:eps_delta} and since $\mc L_0 V(x)\leq-\varrho(x)$ for all $x\in\R^n\setminus\{0\}$ by Assumption \ref{assm:stability},
    \begin{align}
        \mc L_{u}V(x) \leq \mc L_0V(x) + \varrho(x) - \rho(x) \leq -\rho(x) \label{eq:thm_main:proof:1.5}
    \end{align}
    for all $x\in\mc R_R$ and $u\in\delta_R\overline\B$. 
    We claim that $R\mapsto\delta_R$ is monotone increasing on $(0,1]$ and decreasing on $[1,\infty)$, 
    which is thanks to $\delta_R$ being chosen maximally.
    Indeed, if $1\leq R_1\leq R_2$, since \eqref{eq:proof:thm_main:eps_delta} holds for all $x\in\mc R_{R_2}$ and $u\in\mc\delta_{R_2}\overline\B$, \eqref{eq:proof:thm_main:eps_delta} also holds for all $x\in\mc R_{R_1}\subseteq\mc R_{R_2}$ and $u\in\mc\delta_{R_2}\overline\B$, that is, $\delta_{R_2}$ would be a viable candidate for defining $\delta_{R_1}$. But, since $\delta_{R_1}$ is chosen maximally, we have $\delta_{R_1}\geq\delta_{R_2}$.
    The monotone increase on $(0,1]$ is proved similarly.

    We will construct a function $\delta\in\mc{PD}$ with $\delta(x)\leq\delta_{|x|}$ for all $x\in\R^n\setminus\{0\}$.
    For that, we first define the function $\widetilde\delta:\R_{\geq0}\to\R_{\geq0}$ as $\widetilde\delta(0):=0$, $\widetilde\delta(s):=\tfrac2s\int_{s/2}^s \delta_t\td t$ for $s\in(0,1]$ and $\widetilde\delta(s):=\tfrac1s\int_s^{2s} \delta_t\td t$ for $s\in[1,\infty)$.
    By the aforementioned monotonicity of $R\mapsto\delta_R$ we have for $s\in(0,1]$ that $0<\delta_{s/2}\leq \widetilde\delta(s)\leq \delta_s$ holds, and for $s\in(1,\infty)$ that $0<\delta_{2s}\leq\widetilde\delta(s)\leq\delta_s$ holds.
    Furthermore, $\widetilde\delta$ is continuous on $(0,1)$ and $(1,\infty)$ thanks to the integrations (and using that $\delta_t$ is bounded by 1).
    To achieve continuity at $0$ and $1$, we define $\overline\delta:\R_{\geq0}\to\R_{\geq0}$ as $\overline\delta(0):=0$ and
    $\overline\delta(s) := \min\{s, \widetilde\delta(s), \widetilde\delta(1), \lim_{t\searrow1}\widetilde\delta(t)\}$ for $s>0$.
    Also, $0<\overline\delta(s)\leq\delta_s$ for all $s\in(0,\infty)$.
    Finally, define $\delta:\R^n\to\R_{\geq0}$ as $\delta(0):=0$ and $\delta(x):=\overline\delta(|x|)$.
    Then $\delta\in\mc{PD}$ and $\delta(x)\leq\delta_{|x|}$ for all $x\in\R^n\setminus\{0\}$.
    
    We now show \eqref{eq:lem:main}.
    Let $x\in\R^n\setminus\{0\}$ and $u\in\R^m$ with $|u|\leq\delta(x)$.
    Then, because $|u|\leq\delta(x)\leq\delta_{|x|}$ and $x\in\mc R_{|x|}$, inequality  \eqref{eq:proof:thm_main:eps_delta} and thus \eqref{eq:thm_main:proof:1.5} holds, which completes the proof.

\subsection{Proof of Theorem \ref{thm:stronger}}\label{s:proof:thm:stronger}
We follow the proof of Proposition \ref{thm:main} with $\rho:=\varrho/2$ and show $\limsup_{x\to0}\delta(x)/|x|>0$ for the $\delta$ constructed there.
Because the functions $f_u, g_u^i$ and $g_x^i$ defined in Section \ref{s:stronger} are continuous, there exists $B>0$ such that
\begin{align}
    \max\left\{\n{f_u(x,u)}, \n{g_u^i(x,u)}, \n{g_x^i(x)}\right\}\leq B
    \label{eq:proof:lem_exponential:bounds}
\end{align}
for any $x\in\overline\B, u\in\overline\B$ and $i\in\{1,\dots,p\}$.
Furthermore, by \eqref{eq:assm:stronger:2} there exists $C>0$ such that $\left|\tfrac{\del^iV(x)}{\del x^i}\right|\tfrac{|x|^i}{\varrho(x)}\leq C$ for $i=1,2$ and any $x\neq0$ in a neighbourhood of $0$, and because $\frac{\del^iV}{\del x^i}$ are continuous and $\varrho\in\mc{PD}$, $C$ can even be chosen such that this holds for all $x\in\overline\B\setminus\{0\}$.
Then, for any $x\in \overline\B\setminus\{0\}$ and $u\in\overline\B$, by \eqref{eq:LuV_stronger} and choice of $B$ and $C$,
\begin{align*}
    &\mc L_uV(x) - \mc L_0V(x)\\
    &\leq{C\varrho(x)}|x|^{-1}B|u| + (n|x|+ p|u|)B{C\varrho(x)}|x|^{-2}B|u|\\
    &\leq \varrho(x)\left((BC+nB^2C){|u|}{|x|}^{-1} + pB^2C{|u|^2}{|x|}^{-2}\right).\numberthis
\end{align*}
Choose $b:=\tfrac14\min\{(BC+nB^2C)^{-1}, (pB^2C)^{-1/2},1\}$.
Then, for any $x\in\overline\B\setminus\{0\}$ and $u\in\R^m$ with $|u|\leq b|x|$, we have $\mc L_uV(x) - \mc L_0V(x) \leq \varrho(x)/2$.
    Hence, following the proof of Proposition \ref{thm:main} with $\rho := \varrho/2$, we get that \eqref{eq:proof:thm_main:eps_delta} is satisfied for all $(x,u)\in\overline\B\times\R^m$ with $|u|\leq b|x|$.
    Since $\delta_R$ in the proof of Theorem \ref{thm:1.robustness} is chosen maximally to satisfy \eqref{eq:proof:thm_main:eps_delta}, we get $\delta_R\geq bR$ for all $R\in(0,1]$.
    Further following the proof of Theorem \ref{thm:1.robustness}, $\widetilde\delta(s)\geq \delta_{s/2}\geq \tfrac b2s$ for all $s\in(0,1]$, and $\overline\delta(s) \geq \min\{1,b/2\}s$ for all $s\in\left(0,\min\left\{\widetilde\delta(1),\lim_{t\searrow1}\widetilde\delta(t)\right\}\right]$.
    Hence,
    $\liminf_{x\to0}\delta(x)/|x| = \liminf_{s\searrow0}\overline\delta(s)/s \geq \min\left\{1,b/2\right\} > 0,$
    completing the proof.

\subsection{Proof of Proposition \ref{lem:exponential}}\label{proof:prop:exponential}
    By \cite[Theorem 5.12]{khasminskii_stochastic_2012}\footnote{\cite[Theorem 5.12]{khasminskii_stochastic_2012} considers time-varying systems, but the construction in the proof yields a time-invariant Lyapunov function for time-invariant systems.}, there exist $V\in\mc C_0^2(\R^n)$ and $a_1,a_2,a_3,a_4>0$ satisfying \eqref{eq:def:expSLF_1}, \eqref{eq:def:expSLF_2} and \eqref{eq:def:expSLF_3}
for all $x\in\R^n\setminus\{0\}$.
    Because $f$ and $g$ have bounded derivatives by Assumption \ref{assm:exp_stability}, the functions $f_u, g_u, g_x$ defined in Section \ref{s:stronger} are also bounded, that is, there exists $B>0$ such that \eqref{eq:proof:lem_exponential:bounds} holds
for any $(x,u)\in\R^n\times\R^m$ and $i\in\{1,\dots,p\}$.
Using \eqref{eq:def:expSLF_3} for $i=2$, the function $V_x(x) = \int_0^1\tfrac{\del^2 V(sx)}{\del x^2}ds$ defined in Section \ref{s:stronger} is bounded for any $x\in\R^n\setminus\{0\}$ by
    $\n{V_x(x)} 
    \leq \int_0^1 a_4|sx|^{p-2}ds \leq \tfrac{a_4}{p-1}|x|^{p-2}\label{eq:proof:lem_exponential:Vx}$
since $p>1$. With this, \eqref{eq:LuV_stronger} and \eqref{eq:proof:lem_exponential:bounds}, it follows that for any $(x,u)\in(\R^n\setminus\{0\})\times\R^m$,
\begin{align}
    \mc L_uV(x) &\leq \mc L_0V(x) + |x|\tfrac{a_4}{p-1}|x|^{p-2}B|u|\nonumber\\
    &~~~+ n|x|Ba_4|x|^{p-2}B|u| + p|u|Ba_4|x|^{p-2}B|u|\nonumber\\
    &\leq \mc L_uV(x) + C(|x|^{p-1}|u| + |x|^{p-2}|u|^2)
\end{align}
with $C := \max\{\tfrac{a_4}{p-1}B + nB^2a_4, pB^2a_4\}$.

\section{Conclusion}\label{s:conclusion}
We analyzed robustness and input-to-state stability of SDEs under various sets of assumptions.
In the most general case we showed GASp under a state-dependent perturbation bound vanishing at 0, as well as SISS under a state-dependent perturbation bound positive everywhere.
We gave conditions under which perturbations can be up to proportional locally.
Furthermore, we showed that nominal $p$-stability is maintained under a proportional perturbation bound and implies exponential stochastic ISS. 
Finally, we illustrated some of the techniques for backstepping.

\section*{References}
\vspace{-.7cm}
\bibliographystyle{IEEEtranS}
\bibliography{references}

\end{document}